\DeclarePairedDelimiter\ceil{\lceil}{\rceil}
\def\BibTeX{{\rm B\kern-.05em{\sc i\kern-.025em b}\kern-.08em
    T\kern-.1667em\lower.7ex\hbox{E}\kern-.125emX}}
\newcommand\Tau{\mathrm{T}}
\newtheorem{definition}{Definition}
\newtheorem{theorem}{Theorem}
\renewcommand{\openbox}{\leavevmode
  \hbox to.77778em{%
  \hfil\vrule
  \vbox to.675em{\hrule width.6em\vfil\hrule}%
  \vrule\hfil}}
\begin{document}

% \thanks{Identify applicable funding agency here. If none, delete this.}
% }

% \title{Smart City Defense Game: Strategic Budget Management during Multi-Layer Attacks}
% \title{Smart City Defense Game: Strategic Budget Management during Physical and Cyber-Social Attacks}
\title{Smart City Defense Game: Strategic Resource Management during Socio-Cyber-Physical Attacks}

\author{Dimitrios~Sikeridis,~\IEEEmembership{Student~Member,~IEEE}, and Michael Devetsikiotis,~\IEEEmembership{Fellow,~IEEE}

\thanks{D. Sikeridis, and M. Devetsikiotis are with the Department of Electrical and Computer Engineering, The University of New Mexico, Albuquerque, NM, USA, (e-mail: dsike@unm.edu, mdevets@unm.edu).}% <-this % stops a space
\thanks{The research of D. Sikeridis, and M. Devetsikiotis was supported in part by the US National Science Foundation under the NM EPSCoR cooperative agreement Grant OIA-1757207. 
We acknowledge the fruitful discussions with UNM ECE's PROTON Lab group under Prof. E.E. Tsiropoulou during early stages of this work.}
}

\maketitle

\begin{abstract}
% \printinunitsof{in}\prntlen{\textwidth}
% \printinunitsof{in}\prntlen{\columnwidth}
Ensuring public safety in a Smart City (SC) environment is a critical and increasingly complicated task due to the involvement of multiple agencies and the city's expansion across cyber and social layers. In this paper, we propose an extensive form perfect information game to model interactions and optimal city resource allocations when a Terrorist Organization (TO) performs attacks on multiple targets across two conceptual SC levels, a physical, and a cyber-social. The Smart City Defense Game (SCDG) considers three players that initially are entitled to a specific finite budget. Two SC agencies that have to defend their physical or social territories respectively, fight against a common enemy, the TO. Each layer consists of multiple targets and the attack outcome depends on whether the resources allocated there by the associated agency, exceed or not the TO's. Each player's utility is equal to the number of successfully defended targets. The two agencies are allowed to make budget transfers provided that it is beneficial for both. We completely characterize the Sub-game Perfect Nash Equilibrium (SPNE) of the SCDG that consists of strategies for optimal resource exchanges between SC agencies and accounts for the TO's budget allocation across the physical and social targets. Also, we present numerical and comparative results demonstrating that when the SC players act according to the SPNE, they maximize the number of successfully defended targets. The SCDG is shown to be a promising solution for modeling critical resource allocations between SC parties in the face of multi-layer simultaneous terrorist attacks.
\end{abstract}

\begin{IEEEkeywords}
Smart City, Security Game, Resource Management, Socio-Cyber-Physical Systems, Public Safety
\end{IEEEkeywords}

\section{Introduction}

The Smart City (SC) paradigm aims for enhanced citizen life quality and safety by incorporating innovative applications that rely on diverse technologies (e.g. Internet of Things (IoT), cloud computing, big data analytics, and artificial intelligence) \cite{nam2011smart, khatoun2016smart, eckhoff2017privacy}. Arguably, the intelligence of such SC environments also stems from their ability to make decisions related to the use and management of their natural and municipal resources, both in the short term and when accounting for future development \cite{pettit2018planning}. This is not an easy task, especially since the SC organization incorporates a set of primal city entities with specific resource budgets, separate governance structures, and unique operational goals. Such smart city entities primarily include traffic and public transport authorities, departments overlooking critical cyber-physical facilities (i.e., intelligent buildings \cite{sikeridis2018unsupervised}, smart grid, natural gas, or water infrastructures \cite{ferdowsi2017colonel}), information and communication technology (ICT) administrations, and public safety/emergency service agencies (ESAs)\cite{eckhoff2017privacy}. Since these entities often operate on different conceptual levels (physical, cyber, social) within the city structure, the SC paradigm requires a management platform for supervision, coordination and optimal strategic allocation of SC resources, especially in cases of public safety threatening events such as adversarial/human-caused attacks. A case in point is Rio de Janeiro's SC operation center that integrates multiple individual agencies towards optimal disaster response and emergency management \cite{singer2012mission}.

In response, high-level city adversaries on the physical plane like traditional terrorist organizations have advanced their tactics towards conflicting the maximum possible damage by distributing their forces across multiple city targets. The latest terrorist efforts attest to this observation with the Paris attacks in 2015 taking place simultaneously across six distinct physical locations \cite{hirsch2015medical}, and the Brussels bombings in 2016 occurring in coordination across two different city targets \cite{stieglitz2018sense}. Thus, terrorist strategies can be guided by knowledge related to the city authorities' structure such as the distribution of first response resources. This knowledge can be acquired by practical means including adversarial insiders, social engineering against city officers/employees (e.g., by social-media data exploitation), and long-term extraction/analysis of SC open data.

In addition, amidst the era of social media (SM), terrorist groups are rapidly exploiting technological advancements and trends to improve their tactics \cite{burke2016age}. This adaptation creates new city vulnerabilities for exploitation, especially since social media are considered a cyber-social extension of the future SC. Interestingly, in the last decade, the increasing adaptation of SM by citizens and ICT city agencies during emergencies creates a propagation of information to many directions\cite{reuter2018fifteen}. This includes citizen to citizen (self-organization, alerting and aid), SC ICT agencies to citizens (and traditional media to citizens - for public alerting and guidance), and citizens to ICT agencies (SM integration into monitoring environments for intelligence extraction, situation awareness, and immediate response)\cite{stieglitz2018sense, reuter2018fifteen, imran2015processing, xu2017social}. Examples of the massive use of SM during terrorist attacks include the Brussels bombings in 2016 \cite{stieglitz2018sense}, and the Boston Marathon bombings in 2013 \cite{starbird2014rumors, zhang2018scalable}, where a major concern regarding the credibility of posted information emerged. Specifically, during this incident, the diffusion of misinformation and speculation through Twitter actively endangered individual lives and lead to misuse of emergency response resources\cite{starbird2014rumors, zhang2018scalable}. This emerging dependency on SM during crises can be exploited by sophisticated SC adversaries to produce misinformation streams towards directing the public to unsafe city zones or actively obstructing the operation of an SC ESAs. To this end, next-generation terrorist organizations can make use of massive social bots \cite{wang2018era} to generate targeted SM posts, or partner with hacker communities to infiltrate SC alerting/ICT infrastructures \cite{roygame}.

In this work we consider a multi-layer smart city model and present a defense mechanism for optimal SC resource allocation in response to simultaneous terrorist attacks of various types. The key contributions of our research work are summarized as follows:

a) A Smart City is modeled as a multi-dimensional setting which consists of a lower physical plane and an upper cyber-social one. The physical layer is a set of distinct SC physical locations and city points of interest, while the social layer consists of multiple cyber spaces that include social media, web pages and chat spaces.

b) By considering a Terrorist Organization (TO) attack taking place in both SC layers, we model the optimal response of two SC agencies responsible for public safety and SC defense, namely an Emergency Service Agency (ESA) operating at the physical layer and an Information and Communication Technology (ICT) agency operating at the Cyber-Social SC plane. Each organization aims to deploy its financial resources optimally across multiple spaces of interest either physical or cyber, by also considering the possibility of budget exchange with the other agency. We also take into account the TO financial strength and respective budget allocation across SC targets. In order to fully capture the inter-dependencies and interactions among all the conflicting parties we introduce a multi-stage Smart City Defense Game (SCDG) with observed actions and compute the game's subgame perfect Nash equilibrium that describes the optimal strategies of all players focusing on the optimal budget exchange among the SC agencies that minimizes the expected number of successful TO attacks across the two SC layers and targets.

c)  Detailed numerical and comparative results demonstrate that the proposed Smart City Defense Game is a promising solution for modeling SC agencies' resource allocations, internal budget transfers and interactions with a conflicting party towards securing the cyber-physical Smart City of the future. 

\section{Related Work}

Previous research in the general area of anti-terrorism conflicts utilizes game theory (security games) to model interaction between a defender and a sophisticated attacker \cite{sandler2003terrorism}. The common assumption in these works is this of the passive defender that allocates resources before any attack without actively harming his adversary. In this context, two cases can be distinguished: (a) the attacker is not aware of the defender's play, and (b) he has complete knowledge. In the first case, the majority of games are simultaneous yielding Nash equilibria in mixed strategies where both players randomize over their actions to confuse their opponents. The most popular game in this category is the Colonel Blotto game, and its numerous variations, where the players allocate finite resources over multiple battlefields \cite{roberson2006colonel}. Similarly the work in \cite{arce2012weakest} considers a one-to-one resource allocation game across multiple locations with a mixed-strategy equilibrium where the terrorist can choose between attacking with a suicide bomber or use conventional force. In \cite{baron2018game}, the authors develop mixed strategies for the two opponents that choose between two actions for each target, namely act (attack or defend) or not. The authors investigate single or multiple-period security games to examine ongoing conflicts where the terrorist can use one or several attack technologies with different capabilities.  

However, game-theoretic frameworks should account for the fact that terrorist strategies can adopt in response to the defender's actions, while the interactions between the two opponents are independent. Thus, the second category of research works assumes complete knowledge among the two parties and utilizes multi-stage games that are solved using backward induction yielding sub-game perfect Nash equilibria (SPNEs). In \cite{berman2007location} the authors model a two-stage Stackelberg game (leader-follower game) where the state initially decides where to locate resource-packed facilities and the terrorist, given these locations, decides on the attack targets that will maximize the inflicted damage. The work in \cite{meng2018determining} extends this model to account for disruption in the defender's facilities with a non-zero probability of failures on the supply-side (resource unavailability) and propose a heuristic algorithm to solve the developed problem. 

In our study, a mixed approach is followed where a stage of resource allocation/preparation (pure strategies) precedes the actual allocation among targets modeled as a Colonel Blotto game (mixed strategies). In addition, our model considers two defenders and the creation of a coalition among them. In a similar fashion, the authors in \cite{carceles2011strategic} present a multi-stage sequential game model in which a set of different countries are confronted by an international terrorist organization. Initially, countries invest resources to fight proactively the adversary ($1^{st}$ stage), next all countries allocate defense resources ($2^{nd}$ stage), and finally, the terrorist allocates attacking resources among countries ($3^{rd}$ stage). The game studies the nation's cooperation against the terrorist, and yields an SPNE, while in contrast to our work the defensive measures of a specific country can direct terrorist attacks to other allies (the cooperation is not always beneficial). The opposite case of collusive behavior among attackers is studied by Ray et. al. in \cite{roygame}. Their work introduces a coalition formation game that investigates the characteristics and the mechanisms of alliance creation among terrorist/hacker organizations against a single defender. Finally, the work in \cite{gholami2016divide} models the case of multiple adversaries against a single defender as a Stackelberg security game (defender $\equiv$ leader,  attackers $\equiv$ followers), and calculates the optimal defense strategy given knowledge of payoff matrices, and target-related attack-success probabilities.

\begin{figure*}[t]
\centering
\includegraphics[width=\textwidth]{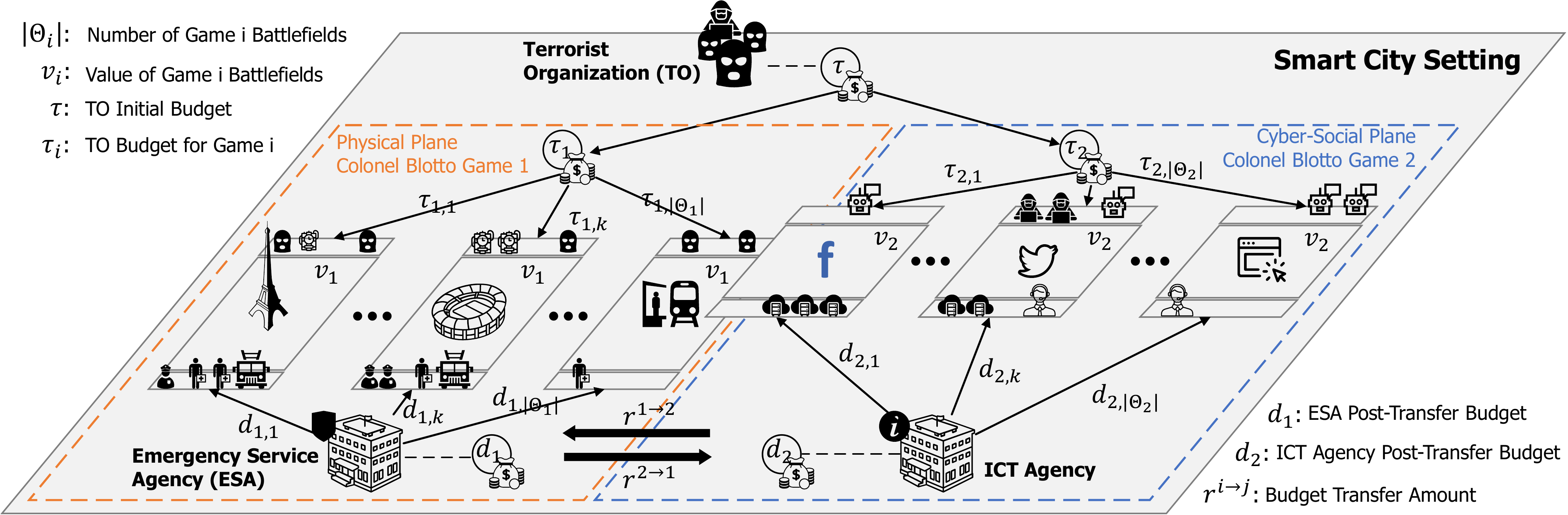}
\caption{\small Smart City Defense Game: Players, Games, and Components }  \label{fig:SysArch}
\vspace{-0.4cm}
\end{figure*}

Other research works that focus on smart city security do so by considering a cyber-physical system perspective with the adversary attempting to compromise individual components. In~\cite{ferdowsi2020interdependence} the authors develop a game theoretic framework to defend intelligent transportation systems against indirect attacks carried out through the power grid. In \cite{ferdowsi2017colonelSC} the authors model CPS elements of a smart city as connected nodes using graph theory and develop a Colonel Blotto-based resource allocation game between a defender and an attacker that tries to compromise the nodes. Similarly, the work in \cite{ferdowsi2017colonel} examines an SC's interdependent critical infrastructure (ICI) consisting of power-gas-water distribution systems and considers a two-stage attack to a set of ICI sensor's protection and state estimation quality. The attacker-defender interaction is modeled as a Colonel Blotto allocation game where the SC administrator allocates resources with the form of computational, communication or financial resources to establish protection levels for the ICI nodes. The authors derive a Mixed Strategy Nash Equilibrium (MSNE) for the two players and examine the optimal defender's strategy for a series of different cases. In \cite{gupta2014three} the authors formulate a multi-stage Blotto game where a single adversary fights against two defenders. The game focuses on the cyber vulnerability of servers against a hacker, has a hierarchical structure similar to our proposed work, and each defender has to decide whether or not to add additional battlefields to the CB games, or transfer resources to the other player. Finally, the work in \cite{rao2016defense}  examines attacks on the cyber and physical parts of a wide-area network testbed. The attacker chooses to attack a single part, while the defender chooses -or not- to reinforce the whole infrastructure acting according to a game-theoretic framework that yields a pure Nash equilibrium for the two opponents.

\section{Smart City Defense Game Model and\\ Problem Formulation}

In this section we set the stage for the proposed Smart City security game which is formulated as a complete information multi-stage game with three players.

\subsection{Attack and Defense Scenarios}

Consider a smart city adversary, namely a Terrorist Organization (TO) $\Tau$ with finite available resources represented by a financial budget $\tau$. The organization deploys a parallel attack towards the SC targeting simultaneously two separate conceptual levels. At the first level, the TO uses a part of its financial resources, denoted as $\tau_{1}$ to perform physical attacks on multiple critical SC area targets by allocating attack budget to each site that can be translated to human agents (suicide vehicles, bombers, shooters, etc.) or attacking equipment. In response, the SC's emergency service agency (ESA) deploys its own financial resources denoted as $c_{1}$ across the critical targeted areas for defense and disaster prevention purposes. The ESA's budget can be translated to first responder units (human resources, police, firefighters, medical personnel) and emergency management equipment. 

In addition, in our model, the sophisticated terrorists attack concurrently a second "cyber-social" level of the SC environment by using another part of their resource budget $\tau_{2}$ with $\tau_{1}+\tau_{2}=\tau$. This is achieved by allocating the $\tau_{2}$ attack budget to disseminate misinformation across multiple social media, traditional media sites or SC alerting infrastructure towards either obscure the truth to affect the general public or temper with social sensing applications utilized by the SC entities \cite{stieglitz2018sense, imran2015processing, xu2017social, zhang2018scalable}. The cyber-social attack budget is utilized by the adversary either towards securing computational resources to enable autonomous social-bot operation for misinformation diffusion \cite{ferrara2016rise, wang2018era} or for acquiring human resources responsible for the same task (partner with hacker organizations under hire etc.\cite{roygame}). On the SC defense side, the ICT administration which is responsible for securing the information-related SC layers utilizes its pre-allocated defense budget $c_{2}$ by allocating it across the different social media entities under attack. The ICT financial resources can be used either 
\begin{enumerate}[a)]
    \item  for deploying ICT administration human resources responsible for identifying/exposing unreliable sources and providing trustworthy news to the public, or
    \item  for dynamically securing and acquiring cloud computing resources (usually offered by public cloud service providers \cite{sikeridis2017comparative}, similar to the case of IBM in Rio \cite{singer2012mission}). Such resources (computing power for real-time data analytics and machine learning frameworks \cite{sikeridis2017comparative}) can be utilized for deploying truth discovery algorithms that identify misinformation in the presence of noisy data from unvetted SM sources (e.g., as in \cite{zhang2018scalable} where the proposed solution was evaluated against real-world Twitter datasets extracted from recent terrorist attacks).
\end{enumerate}

For both scenarios, we will assume that the party that has allocated the majority of resources in each targeted area (either physical or cyber-social) has successfully achieved his goal (landed a successful attack or managed to defend the target). Since the satisfaction of each player depends not only on his actions but also on the actions of his opponent (i.e., the number of resources strategically allocated) we can use game theory to model their interactions \cite{fudenberg1991game}. Thus, in order to model (a) the player interactions on the two parallel city levels (physical and cyber-social) and (b) model their allocation of budget across multiple city area targets and across multiple cyber-social spaces (e.g. different social media), we assume that the TO participates concurrently to two Colonel Blotto games \cite{roberson2006colonel} against the two city entities. We will further assume that the two SC entities are able to form a coalition towards exchanging emergency resources if it is beneficial for both of them. In order to model this resource transfer and examine its characteristics given the TO's own allocation of resources among his two rivals (i.e., the two  SC entities and in extension the two SC layers) we formulate a multi-stage complete information Smart City Defense Game (SCDG) which is partly based on the multi-stage Blotto game described in \cite{kovenock2012coalitional}. In what follows we define and describe the basic parts of the game.

\subsection{The Colonel Blotto Game}\label{sub:CBG}

The continious colonel Blotto Game \cite{roberson2006colonel} models the strategic resource allocation between two opponents with finite infinitely divisible resources (troops) in a competitive environment that consists of multiple battlefields. The two opponents play the game by allocating their troops to each battlefield. The player that allocated the larger amount wins the battlefield while their objective is to win as many battles as possible. It is an one-shot game defined as CBG$\{P,\{F^{p}\}_{p \in P},\{S^{p}\}_{p\in P}, \Theta, v, \{ U^{p}\}_{p \in P} \}$ where:
\begin{itemize}
    \item $P \triangleq \{P_{A},P_{B}\}$ denotes the two opponents/players
    \item $F^{p}$ are the available resources of player $p \in P$
    \item $S^{P}$ is the set of strategies for player $p$, $p \in P$
    \item $\Theta$ is the set of the game's battlefields with $\theta = |\Theta|$
    \item $v$ denotes the value of each battlefield
    \item $U^{p}$ is the utility function of player $p$, $p \in P$
\end{itemize}
The two players distribute their total forces $F^{p}$ across the $n$ battlefields with the allocation vector of player $p$ being $\boldsymbol{f}^{p}=[f^{p}_{1},...,f^{p}_{k},...,f^{p}_{n}]$, where $f^{p}_{k}$ is the resource amount assigned to battlefield $k$. Thus, the strategies of each player is the set $S^{p}$ of all the possible allocations across the battlefield:
\begin{equation*}\small
    S^{p}  \triangleq \{ \boldsymbol{f}^{p}~ |~ \sum_{k=1}^{\theta} f^{p}_{k} \leq F^{p},~~f^{p}_{k}\geq 0 \}
\end{equation*}
Each battlefield is won by the player with the highest resource contribution, while the payoff of player $p$ from winning a single battlefield $k$ is defined as:
\begin{equation*}\small
    u^{p}_{k}({f}^{p}_{k},{f}^{-p}_{k}) = \begin{cases}
                                        v &\text{if ${f}^{p}_{k}>{f}^{-p}_{k}$}\\
                                        0 &\text{if ${f}^{p}_{k}<{f}^{-p}_{k}$}\\
                                        \frac{v}{2} &\text{if ${f}^{p}_{k}={f}^{-p}_{k}$}
\end{cases}
\end{equation*}
where ${f}^{-p}_{k}$ denotes the opponent's resource contribution to battlefield $k$. The opponent's payoff per battlefield is $u^{-p}_{k}({f}^{p}_{k},{f}^{-p}_{k})=v - u^{p}_{k}({f}^{p}_{k},{f}^{-p}_{k})$. The overall utility of each player is defined as:
\begin{equation*}\small
    U^{p}({\boldsymbol{f}}^{p},{\boldsymbol{f}}^{-p})= \sum_{k=1}^{\theta}u^{p}_{k}({f}^{p}_{k},{f}^{-p}_{k})
\end{equation*}
The goal of each player $p$ is to choose a strategy in $S^{p}$ (i.e., a resource allocation vector) that maximizes his utility and number of won battlefields given his opponent's selected strategy. 
\begin{definition}\label{definition}
For the CBG a strategy profile $\{\boldsymbol{f}^{p*},\boldsymbol{f}^{-p*}\}$, $\boldsymbol{f}^{p*} \in S^{p}$ and  $\boldsymbol{f}^{-p*} \in S^{-p}$ is a pure-strategy Nash equilibrium if for player $p$:
\begin{equation}\small
    U^{p}(\boldsymbol{f}^{p*},\boldsymbol{f}^{-p*})\geq U^{p}(\boldsymbol{f}^{p},\boldsymbol{f}^{-p*}),~ \forall \boldsymbol{f}^{p} \in S^{p}.
\end{equation}
\end{definition}
It has been proven in \cite{roberson2006colonel} that the CBG is not guaranteed to yield a NE in pure-strategies. Therefore, a NE for the CBG exists in mixed-strategies, where each opponent $p \in P$ chooses a multi-variant probability density function over $S^{p}$ (assigns a probability for playing each pure strategy). A CBG mixed strategy for player $p$ is a distribution of resources expressed by a $\theta$-variate distribution function $O^{p}:\mathbb{R}^{\theta}_{+} \xrightarrow{} [0,1]$ with support contained inside the set $S^{p}$ of feasible resource allocations. 
\begin{definition}\label{definition2}
Let $\mathcal{Q}^{p*}$ be the set of all probability distributions over player's $p$ pure-strategy space $S^{p}$. For the CBG a mixed strategy profile set $\{O^{p*},O^{-p*}\}$ is a mixed-strategy Nash equilibrium (MSNE) if for player $p$:
\begin{equation}\small
    U^{p}(O^{p*},O^{-p*})\geq U^{p}(O^{p},O^{-p*}),~ \forall O^{p} \in \mathcal{Q}^{p}.
\end{equation}
\end{definition}
\noindent
Each $\theta$-variate distribution function $O^{p}$ is associated with a set of univariate
marginal distribution functions $\{ \Phi_{k}^{p}\}_{k=1}^{\theta}:\mathbb{R}_{+} \xrightarrow{} [0,1]$ for each battlefield $k$. For a player $p$, given his mixed strategy NE, the forces' allocation vector $\boldsymbol{f}^{p}=[f^{p}_{1},...,f^{p}_{k},...,f^{p}_{n}]$ is drawn from $O^{p}$ with $f^{p}_{k}$ being a random variable drown from $\Phi_{k}^{p}$.

 \subsection{The Smart City Defense Game}

Given the SC attack/defense scenarios and the CBG discussion above we formulate a multi-stage SC Defense Game (SCDG) with observed actions that consists of three players and captures all interactions between allies and opponents. The two SC entities, namely the Emergency Service Agency and the ICT agency, that will be denoted as player 1 and 2 respectively, fight against the Terrorist Organization denoted by T. The pre-allocated financial defense budget of each SC player $i \in \{1,2\}$ is $c_{i}$, while the Terrorist organization's attack budget is $\tau$. The two-layer conflict takes place simultaneously across $\theta_{1}$ SC area physical targets (set $\Theta_{1}$) that yield a payoff of $v_{1}$ to the winner (TO or ESA agency), and across $\theta_{2}$ social media/cyber-social targets (set $\Theta_{2}$) that yield a payoff of $v_{2}$ to the winner (TO or ICT agency) assuming $\theta_{i} \geq 3$ $\forall i\in\{1,2\}$ and $b_{i},v_{i} \in \mathbb{R} $.

The SCDG is an extensive form perfect information game whose model parameters and actions taken by all players during previous stages are common knowledge. Thus, at the beginning of each stage there is a well-defined history $h_{stage}$, and a set of all possible histories $H_{stage}$. For this initial first stage $h_{1}=\emptyset$, and $\Pi=\{c_{1},c_{2},\tau,v_{1},v_{2},\theta_{1},\theta_{2}\}$ is the set of initial SCDG parameters that describe the setting. During the first stage the two SC entities form a coalition and choose whether to make a budget transfer towards their ally or not while the TO performs no action. We denote the amount of financial resource transfer from SC agency $i$ to agency $j$ as $r^{i\rightarrow{} j} \in [0,c_{i}]$, while $\{r^{1\rightarrow{} 2},  r^{2\rightarrow{} 1} \}$ is a first stage action profile. Each SC entity's $i$ transfer amount (its first stage strategy) is given by the the function $R^{i}:\Pi \rightarrow{A^{i}_{1}(\Pi)}$, where as $A^{i}_{1}$ we denote the set of all available first stage transfer actions of SC agency $i$. Following the budget transfer, the SC agency's $i$ defense endowment is given by: 
\begin{equation}
    d_{i}(r^{i\rightarrow{} j},r^{j\rightarrow{} i})=c_{i}-r^{i\rightarrow{} j}+r^{j\rightarrow{} i}~~\forall i,j\in\{1,2\},~i \neq j
\end{equation}

The SC entities' budget transfer is observed by the adversary T who, at the second stage of the game, decides on his resource allocation across the two battles/games (physical and cyber-social) and against the two SC defense opponents that perform no action in this stage. The action history after stage one is $h_{2}=\{r^{1\rightarrow{} 2},  r^{2\rightarrow{} 1}\}$, and $H_{2}$ is the set of all possible histories (SC alliance budget exchanges). Given $h_{2}$ the TO allocated budget $\tau_{1}$ to fight the physical SC battle and budget $\tau_{2}$ to fight at the cyber-social layer with $\tau_{1}+\tau_{2}\leq \tau$. Thus, the stage two action profile is $\{\tau_{1},\tau_{2}\}$ with $A^{T}_{2}(H_{2})$ being the set of all available  budget $\tau$ divisions across the two SC layers. The TO's strategy during this stage is expressed by the amount he chose to allocate to the physical attack effort and is given by a function $\mathcal{T}:H_{2} \rightarrow{A^{T}_{2}(H_{2})}$, i.e., $\tau_{1}=\mathcal{T}(h_{2})=\mathcal{T}(r^{1\rightarrow{} 2},  r^{2\rightarrow{} 1})$. It follows that $\tau_{2}=\tau-\tau_{1}$.

Entering the final stage of the SCDG the history is formed as $h_{3}=\{h_{1},h_{2}\}=\{r^{1\rightarrow{} 2},  r^{2\rightarrow{} 1},\tau_{1},\tau_{2}\}$ with $H_{3}$ being the set of all possible histories up to this point. During this SCDG stage the adversary TO participates in parallel to two CBGs (physical and cyber-social) that model his interactions with the SC defenders across all targets. Thus for each front $i$ and against a SC entity $i$ we formulate two CBGs $\forall i \in \{1,2\}$, namely:
\begin{equation}\small
    CBG_{i}\{\{T,i\},\{\tau_{i},d_{i}\}, \{S^{T}_{i},S^{i}\}, \Theta_{i}, v_{i}, \{ U^{T}_{i},U^{i}\} \} 
\end{equation}
with budget allocation vectors across physical and cyber-social battlefields $k$ denoted as $\boldsymbol{t}_{i}=[\tau_{i,1},...,\tau_{i,k},...,\tau_{i,\theta_{i}}]$, and $\boldsymbol{d}_{i}=[d_{i,1},...,d_{i,k},...,d_{i,\theta_{i}}]$ for the TO T and SC entities $i$, $i \in \{1,2\}$ respectively with $\sum_{k=1}^{\theta_{i}} \tau_{i,k} \leq \tau_{i}$, $\tau_{i,k}\geq 0 $, and $\sum_{k=1}^{\theta_i} d_{i,k} \leq d_{i}$, $d_{i,k}\geq 0 $. As discussed in subsection \ref{sub:CBG} these two games yield mixed strategy NEs where the players' budget allocation vectors across battlefields as seen in subsection \ref{sub:CBG} consist of random variables $\tau_{i,k},d_{i,k}$ characterized by the univariant distribution functions $\{\mathfrak{T}_{i,k}\}_{k=1}^{\theta_{i}}$ and $\{\mathfrak{D}_{i,k}\}_{k=1}^{\theta_{i}}$, $\forall i \in \{1,2\}$ respectively for each SC target $k \in \Theta_{i}$.

% Each player's allocations to each battlefield of the two CBGs are essentially the action set of the SCDG final stage with the sets that contain all available (feasible) budget distributions being denoted as $A^{1}_{3}$, $A^{2}_{3}$, $A^{T,1}_{3}$, $A^{T,2}_{3}$ for each player respectively. The mixed strategies (the $\theta_{i}$-variate distribution functions as defined in subsection \ref{sub:CBG}) that express the distribution of budget for each player across the battlefields of the two CBGs are: 
% \begin{equation}
%     \begin{split}
%         O^{1}:H_{3} \rightarrow{A^{1}_3},~~ i.e.,~O^{1}(h_{3})= O^{1}(r^{1\rightarrow{} 2},  r^{2\rightarrow{} 1},\tau_{1},\tau_{2})\\
%         O^{2}:H_{3} \rightarrow{A^{2}_3},~~ i.e.,~ O^{2}(h_{3})= O^{1}(r^{1\rightarrow{} 2},  r^{2\rightarrow{} 1},\tau_{1},\tau_{2})\\
%         O^{T}_{1}:H_{3} \rightarrow{A^{T}_3},~~ i.e.,~ O^{T}_{1}(h_{3})= O^{T}_{1}(r^{1\rightarrow{} 2},  r^{2\rightarrow{} 1},\tau_{1},\tau_{2})\\
%         O^{T}_{2}:H_{3} \rightarrow{A^{T}_3},~~ i.e.,~ O^{T}_{2}(h_{3})= O^{T}_{2}(r^{1\rightarrow{} 2},  r^{2\rightarrow{} 1},\tau_{1},\tau_{2})\\        
%     \end{split}
% \end{equation}
The mixed strategies (the $\theta_{i}$-variate distribution functions as defined in subsection \ref{sub:CBG}) that express the distribution of budget for each player across the two CBGs' battlefields are: 
\begin{equation*}\small
    \begin{split}
        O^{1}(h_{3})= O^{1}(r^{1\rightarrow{} 2},  r^{2\rightarrow{} 1},\tau_{1},\tau_{2}),~
        O^{2}(h_{3})= O^{1}(r^{1\rightarrow{} 2},  r^{2\rightarrow{} 1},\tau_{1},\tau_{2})\\
        O^{T}_{1}(h_{3})= O^{T}_{1}(r^{1\rightarrow{} 2},  r^{2\rightarrow{} 1},\tau_{1},\tau_{2}),~
        O^{T}_{2}(h_{3})= O^{T}_{2}(r^{1\rightarrow{} 2},  r^{2\rightarrow{} 1},\tau_{1},\tau_{2})\\  
    \end{split}
\end{equation*}

The MSNEs characterize a state for the two games where the two SC defenders have chosen their optimal randomization over their budget allocation across battlefields (SC area targets for game 1, cyber-social spaces for game 2) and thus they cannot improve the SC protection by making a different choice. In addition, the MSNEs for the TO across the two CBGs he participates in, are two probability distributions that capture his $\tau_{1},\tau_{2}$ budget allocations over battlefields towards maximizing his expected utility, namely the number of physical areas and social-media environments he will successfully strike. For the proposed SCDG the use of mixed strategies for both fronts/games is motivated by the fact that both the TO and the SC entities have to randomize over their strategies towards preventing their opponent to guess their potential action. 

Let us now define the overall strategy profile of each SCDG player, which is a collection of maps from all possible histories into available actions, namely:
\begin{equation}\small
    \begin{split}
&    \zeta^{i} \triangleq \{R^{i},O^{i}\}~~ \forall i \in \{1,2\}\\
&    \zeta^{T} \triangleq \{\mathcal{T},O^{T}_{1},O^{T}_{2}\}
    \end{split}
\end{equation}
where $\zeta^{i}$ are the strategies (collection of functions) of the city entity $i$, $i \in \{1,2\}$, and $\zeta^{T}$ denotes the TO's strategies. Thus, the strategy profile is $\zeta=\{\zeta^{1}, \zeta^{2}, \zeta^{T}\}$ and the set that contain all possible player strategies is denoted as $Z \triangleq \{Z^{1},Z^{2},Z^{T}\}$, where $Z^{p},~p\in\{1,2,T\}$ is the set containing all possible actions of SCDG player $p$.

Given the allocation of budget of the three players to each battlefields of the two parallel CBGs (final stage), we will further define the SCDG's terminal history as $h_{terminal}=\{r^{1\rightarrow{} 2},  r^{2\rightarrow{} 1},\tau_{1},\tau_{2}, \boldsymbol{t}_{1}, \boldsymbol{t}_{2}, \boldsymbol{d}_{1}, \boldsymbol{d}_{2}\}$ and as $H_{terminal}$ the set of all possible terminal histories. Finally, as $\mathcal{H}=H_{1}\cup H_{2}\cup H_{3} \cup H_{terminal}$ we denote the set of possible histories.
Given the mixed strategies of each player, and the CBG definition in subsection \ref{sub:CBG} the SCDG payoff functions following the final stage are $\Psi^{i}:H_{terminal} \rightarrow{\mathbb{R}}, \forall i \in \{1,2\}$, and $\Psi^{T}:H_{terminal} \rightarrow{\mathbb{R}}$.
Since the strategy profile $\zeta^{p}$ of each player $p$, $p \in \{1,2,T\}$ determines the SCDG's action path (i.e the $H_{terminal}$) we can express the payoffs as:
\begin{equation}\label{Eq7}\small
    \begin{split}
&    \Psi^{i}(\zeta^{1}, \zeta^{2}, \zeta^{T}) \triangleq \mathbb{E}\bigg[\sum_{k=1}^{\theta_{i}}u^{i}_{k}({\tau}_{i,k},{d}_{i,k})\bigg]\triangleq E\bigg [U^{i}\bigg]\\
&    \Psi^{T}(\zeta^{1}, \zeta^{2}, \zeta^{T}) \triangleq \mathbb{E}\bigg[\sum_{i=1}^{2}\sum_{k=1}^{\theta_{i}}u^{T}_{i,k}({\tau}_{i,k},{d}_{i,k})\bigg]\triangleq E\bigg[ U^T_1  + U^T_2\bigg]
    \end{split}
\end{equation}
where:
\begin{equation*}\small
    u^{i}_{k}({\tau}_{i,k},{d}_{i,k}) = \begin{cases}
                                        v_{i} &\text{if ${d}_{i,k}>{\tau}_{i,k}$}\\
                                        0 &\text{if ${d}_{i,k}<{\tau}_{i,k}$}\\
                                        \frac{v_{i}}{2} &\text{if ${d}_{i,k}={\tau}_{i,k}$}
\end{cases}~~~~ \forall i \in \{1,2\}
\end{equation*}
\begin{equation*}\small
u^{T}_{i,k}({\tau}_{i,k},{d}_{i,k})= v_{i} - u^{i}_{k}({\tau}_{i,k},{d}_{i,k})
\end{equation*}
with $d_{i,k}, \tau_{i,k}$ being the random variables that denote the budget allocated by the players to a battlefield $k$.
The formal definition of the finite complete information SCDG is: 
\begin{equation*}\small
\begin{split}
& SCDG\bigg\{ \{1,2,T\}, \{\mathcal{H}\}, \{Z\},\\
& \{R_{1}, R_{2}, \mathcal{T}, O^{1}, O^{2}, O^{T}_{1}, O^{T}_{2}\}, \{\Psi^{1}, \Psi^{2}, \Psi^{T}\} \bigg\}.
\end{split}
\end{equation*}

\begin{definition}\label{definition3}
A behavior strategy profile $\zeta^{*}\triangleq \{\zeta^{1*}, \zeta^{2*}, \zeta^{T*}\}$ in the strategy set $Z \triangleq \{Z^{1},Z^{2},Z^{T}\}$ is a Nash equilibrium of the SCDG with set of players $P \triangleq \{1,2,T\}$ if
\begin{equation}\small
    \Psi^{p}(\zeta^{1*}, \zeta^{2*}, \zeta^{T*})\geq \Psi^{p}(\zeta^{p}, \zeta^{-\mathbf{p}*}),~ \forall \zeta^{p} \in Z^{p},~ p \in P.
\end{equation}
\end{definition}

\section{Subgame Perfect Nash Equilibrium of the Smart City Defense Game}\label{sec:SPNE}

Since the SCDG is a multi-stage complete information game in extensive form we define a subgame perfect Nash equilibrium that requires the strategy of each player to be optimal after every stage history and not just at the beginning of the game \cite{fudenberg1991game}.
\begin{definition}\label{definition4}
Given a stage $\epsilon$ history $h_{\epsilon}$, $G(h_{\epsilon})$ is a SCDG's subgame happening after $h_{\epsilon}$ and $\zeta^{p}|h_{\epsilon}$ is the restriction of player's $p$, $p \in \{1,2,T\}$ strategies to histories in $G(h_{\epsilon})$. Then a behavior strategy profile $\zeta$ is a subgame perfect Nash equilibrium if for every $h_{\epsilon}$, the restriction $\zeta|h_{\epsilon}$ is a Nash equilibrium in $G(h_{\epsilon})$.
\end{definition}
For such multi-stage games with observed actions we can verify that a strategy profile $\zeta$ is subgame perfect by ensuring that no player $p$ can increase his utility by deviating from $\zeta$ in a single stage and reverting to $\zeta$ for the rest of the game. This is verified by using the one-stage deviation principle for finite games.
\begin{theorem}\label{theorem1}
The SCDG strategy profile $\zeta^{*}$ is a subgame perfect Nash equilibrium (SPNE) if and only if it satisfies the one-stage-deviation condition that for all players $p$, $p \in \{1,2,T\}$, stages $\epsilon$, and histories $h_{\epsilon}$: 
\begin{equation}\small
    \begin{split}
&    \Psi^{p}(\zeta^{p*}, \zeta^{-\mathbf{p}*}|h_{\epsilon})\geq \Psi^{p}(\zeta^{p}, \zeta^{-\mathbf{p}*}|h_{\epsilon})\\
&    s.t.~~ \zeta^{p}(h^{\epsilon})\neq \zeta^{p*}(h_{\epsilon}) \\
&   \zeta^{p}_{|h_{\epsilon}}(h_{\epsilon + \omega})=\zeta^{p*}_{|h_{\epsilon}}(h_{\epsilon + \omega})\\
& \forall \omega > 0,~  \forall \zeta^{p} \in Z^{p},~\forall p \in \{1,2,T\}.
    \end{split}
\end{equation}
\end{theorem}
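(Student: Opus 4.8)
The plan is to recognize Theorem~\ref{theorem1} as the \emph{one-stage-deviation principle} specialized to the SCDG, and to prove the two implications separately while exploiting the fact that the SCDG has a \emph{finite horizon} of exactly three stages.

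For the ``only if'' direction I would argue that the condition is immediate from Definition~\ref{definition4}. If $\zeta^{*}$ is a subgame perfect Nash equilibrium, then for every stage history $h_{\epsilon}$ the restriction $\zeta^{*}|h_{\epsilon}$ is a Nash equilibrium of the subgame $G(h_{\epsilon})$; by Definition~\ref{definition3} this rules out \emph{every} profitable unilateral deviation of any player $p$ in $G(h_{\epsilon})$, in particular those deviations $\zeta^{p}$ that differ from $\zeta^{p*}$ only at the single stage $\epsilon$ and coincide with $\zeta^{p*}$ at all later histories $h_{\epsilon+\omega}$, $\omega>0$. Hence the displayed one-stage-deviation inequality holds.

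The substantive direction is ``if'', which I would prove by contraposition. Assume the one-stage-deviation condition holds for all $p$, $\epsilon$, $h_{\epsilon}$, yet $\zeta^{*}$ fails to be a SPNE. Then there is a player $p$, a history $h_{\epsilon}$, and a strategy $\hat{\zeta}^{p}\in Z^{p}$ with $\Psi^{p}(\hat{\zeta}^{p},\zeta^{-\mathbf{p}*}|h_{\epsilon})>\Psi^{p}(\zeta^{p*},\zeta^{-\mathbf{p}*}|h_{\epsilon})$, i.e.\ a \emph{multi-stage} profitable deviation. Because the SCDG terminates after stage three, $\hat{\zeta}^{p}$ can disagree with $\zeta^{p*}$ in at most the three stages (transfer, layer-split, Blotto randomization), so I let $\hat{\epsilon}$ be the \emph{last} stage at which they differ and build $\tilde{\zeta}^{p}$ that copies $\hat{\zeta}^{p}$ everywhere except that it reverts to $\zeta^{p*}$ at every stage-$\hat{\epsilon}$ history. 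At any stage-$\hat{\epsilon}$ subgame the strategies $\tilde{\zeta}^{p}$ and $\zeta^{p*}$ now agree from $\hat{\epsilon}$ onward, so passing from $\tilde{\zeta}^{p}$ to $\hat{\zeta}^{p}$ is precisely a single-stage deviation at $\hat{\epsilon}$; the hypothesis then yields $\Psi^{p}(\tilde{\zeta}^{p},\zeta^{-\mathbf{p}*}|h_{\hat{\epsilon}})\ge\Psi^{p}(\hat{\zeta}^{p},\zeta^{-\mathbf{p}*}|h_{\hat{\epsilon}})$ at each such history. Since the payoffs in \eqref{Eq7} are expectations that aggregate the continuation payoffs of the reached stage-$\hat{\epsilon}$ subgames, and since $\tilde{\zeta}^{p}$ and $\hat{\zeta}^{p}$ induce the same play before $\hat{\epsilon}$, this weak improvement propagates back to $h_{\epsilon}$, giving $\Psi^{p}(\tilde{\zeta}^{p},\zeta^{-\mathbf{p}*}|h_{\epsilon})\ge\Psi^{p}(\hat{\zeta}^{p},\zeta^{-\mathbf{p}*}|h_{\epsilon})$, while $\tilde{\zeta}^{p}$ differs from $\zeta^{p*}$ in strictly fewer stages.

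Finally I would iterate this replacement: each step removes one stage of disagreement without ever lowering player $p$'s payoff at $h_{\epsilon}$, so after at most three steps the constructed strategy equals $\zeta^{p*}$, yielding $\Psi^{p}(\zeta^{p*},\zeta^{-\mathbf{p}*}|h_{\epsilon})\ge\Psi^{p}(\hat{\zeta}^{p},\zeta^{-\mathbf{p}*}|h_{\epsilon})$ and contradicting strict profitability. The step I expect to be the main obstacle is the backward propagation through the mixed-strategy final stage: the ``action'' chosen in stage three is an entire behavior strategy (the $\theta_{i}$-variate allocation distribution of each $CBG_{i}$), so I must argue that substituting $\zeta^{p*}$ for $\hat{\zeta}^{p}$ at $\hat{\epsilon}$ leaves every continuation expectation well defined and weakly larger. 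This is where I would lean on the boundedness of the per-battlefield utilities $u^{p}_{k}$ (each at most $v_{i}$, so the total payoff is bounded by $v_{1}\theta_{1}+v_{2}\theta_{2}$) and on the MSNE characterization of the $CBG_{i}$ from Subsection~\ref{sub:CBG}, which guarantees that the required continuation optima exist. Crucially, termination of the iteration relies on the finiteness of the \emph{horizon}, not of the action sets, so the continuum of transfer, split, and allocation choices poses no difficulty.
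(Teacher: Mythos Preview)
Your proof is correct and follows exactly the standard backward-induction argument for the one-stage-deviation principle in finite-horizon multistage games. The paper itself does not supply a proof of Theorem~\ref{theorem1}; it simply writes ``The proof of Theorem~\ref{theorem1} can be found in \cite{fudenberg1991game}.'' Your write-up is essentially the Fudenberg--Tirole argument specialized to the three-stage SCDG, so there is no methodological difference to compare: you have filled in what the paper outsources to the textbook, and your care about the mixed-strategy continuation in stage three (boundedness of $u^{p}_{k}$ and existence of the $CBG_{i}$ MSNE) is an appropriate gloss that the cited reference would also require.
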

The proof of Theorem \ref{theorem1} can be found in \cite{fudenberg1991game}. In order to derive the SPNE, for the SCDG we will apply backward induction since the game is of perfect information with exactly three stages (a finite number) \cite{fudenberg1991game}. The process identifies the equilibria in the latest stages and moves up until the initial stage of the extensive form game. In our case the backward induction algorithm initially considers the payoffs obtained by the optimal choice of the three players in the final Colonel Blotto games stage (Nash Equilibrium) that maximizes their payoff. In what follows, we describe the backward induction process towards determining their SCDG SPNE, focusing on the budget allocation strategies of the TO, ESA, and ICT agency.

\subsection{Colonel Blotto Nash Equilibrium Payoffs}

First, we focus our attention to the payoffs of the three players at the Nash Equilibrium of the two CBGs that take place at the physical ($CBG_{1}$) and the cyber-social plane ($CBG_{2}$) of the smart city. Given the definition in Section \ref{sub:CBG} and the analysis of the static CBG in \cite{roberson2006colonel} the payoffs for each player depend on the initial budgets $\tau_{i}$ (for the TO), $d_{i}$  (for each SC entity), and are given as follows.

For each static $CBG_{i}$ of value $\phi_{i}=|\Theta_{i}|\cdot v_{i}$ that takes place at the third stage of the SCDG there exist a Nash equilibrium with unique  payoff for a SC entity player $i$, $i \in \{1,2\}$ playing against the TO $T$, $i \in \{1,2\}$. Each SC player's payoff is~\cite{hajimirsaadeghi2017dynamic}: 
\begin{equation}\label{Eq10}\small
U^{i}(\tau_{i},d_{i}) = \begin{cases}
            0,~~~~~~~~~~~~~~~~~~~~~~~~~~~ if~~\frac{d_{i}}{\tau_{i}}<\frac{1}{|\Theta_{i}|}\\
            
            \phi_{i}\Big(\frac{2 \cdot \beta - 2}{\beta \cdot |\Theta_{i}|^2}\Big),~~~~~~~~~~~~~~~ if~~\frac{1}{|\Theta_{i}|} \leq \frac{d_{i}}{\tau_{i}}<\frac{1}{|\Theta_{i}|-1}\\
            
            \phi_{i}\Big(\frac{2}{|\Theta_{i}|}-\frac{2\cdot \tau_{i}}{|\Theta_{i}|^2 \cdot d_{i}}\Big),~~~~~~ if~~ \frac{1}{|\Theta_{i}|-1} \leq \frac{d_{i}}{\tau_{i}} < \frac{2}{|\Theta_{i}|}\\
            
            \phi_{i}\cdot \frac{d_{i}}{2\cdot \tau_{i}},~~~~~~~~~~~~~~~~~~~  if~~ \frac{2}{|\Theta_{i}|} \leq \frac{d_{i}}{\tau_{i}} < 1\\
            
            \phi_{i}-\phi_{i}\cdot \frac{\tau_{i}}{2\cdot d_{i}},~~~~~~~~~~~~~ if~~ 1 \leq \frac{d_{i}}{\tau_{i}} < \frac{2}{|\Theta_{i}|}\\
            
            \phi_{i}-\phi{i}\Big(\frac{2}{|\Theta_{i}|}-\frac{2\cdot d_{i}}{|\Theta_{i}|^2 \cdot \tau_{i}}\Big),~if~~ \frac{2}{|\Theta_{i}|} \leq \frac{d_{i}}{\tau_{i}} < |\Theta_{i}|-1\\
            
             \phi_{i}-\phi_{i}\Big(\frac{2 \cdot \beta' - 2}{\beta' \cdot |\Theta_{i}|^2}\Big),~~~~~~~~~if~~ |\Theta_{i}|-1 \leq \frac{d_{i}}{\tau_{i}} \leq |\Theta_{i}|\\
             
             \phi_{i},~~~~~~~~~~~~~~~~~~~~~~~~~~ if~~ |\Theta_{i}|<\frac{d_{i}}{\tau_{i}}
\end{cases}
\end{equation} 
where $\beta=\ceil*{\frac{\frac{d_{i}}{\tau_{i}}}{1-(|\Theta_{i}-1)\frac{d_{i}}{\tau_{i}}}}$, and $\beta'=\ceil*{\frac{\frac{\tau_{i}}{d_{i}}}{1-(|\Theta_{i}-1)\frac{\tau_{i}}{d_{i}}}}$. 
Accordingly the payoff of the TO for the $CBG_{i}$, $i \in {1,2}$ is:
\begin{equation}\label{Eq11}
    U^{T}_{i}(\tau_{i},d_{i}) = \phi_{i} - U^{i}(\tau_{i},d_{i})
\end{equation}
where $\tau_{i}$ is the TO budget allocated for game $i$ and $d_{i}$ the SC entity's $i$, $i \in {1,2}$ budget entering the SCDG final stage.

The authors in the seminal work \cite{roberson2006colonel} provide a proof of the existence of the equilibrium in the CBG. Determining the MSNE for the CBG and thus the $\theta$-variate distributions is not trivial and an active research area \cite{roberson2012non, ferdowsi2018generalized}. A number of approaches have been proposed including fictitious play \cite{hajimirsadeghi2016inter}, and geometric methods \cite{thomas2012n} while the latest research works rely on dynamic programming approaches to solve the discrete version of the game \cite{behnezhad2017faster, vu2018efficient}. Since it is out of the scope of this work we will omit MSNE construction details. Evidently, the final payoffs of the SC entities critically depend on the budget levels after the resource transfer which is the phenomenon we try to model in this work.

Given the definition of the SCDG payoffs for the three players as presented in Eq. \ref{Eq7}, there are 64 unique forms of the SCDG payoff function $\Psi^{T}$ for the SC adversary TO $T$ (8 possible payoffs from $CBG_{1}$ and another 8 from $CBG_{2}$). This leads to a vast number of SPNE that complicate the tractability of our solution. Therefore, in order to simplify our analysis, we will assume that the number of battlefields for the two games is arbitrarily large, which is physically supported by the fact that the examined SC environment consists of a very large number of possible physical targets and even larger number if social environments in the cyber space. In this case, the number of unique TO payoffs $\Psi^{T}$ collapses to 4 and Eq. \ref{Eq10}-\ref{Eq11} can be rewritten as:
\begin{equation}\label{Eq12}\small
U^{i}(\tau_{i},d_{i}) = \begin{cases}
            \phi_{i}\cdot \frac{d_{i}}{2\cdot \tau_{i}},),~~~~~~~~~  if~~ \frac{2}{|\Theta_{i}|} \leq \frac{d_{i}}{\tau_{i}} < 1\\
            \phi_{i}-\phi_{i}\cdot \frac{\tau_{i}}{2\cdot d_{i}},~~~~~ if~~ 1 \leq \frac{d_{i}}{\tau_{i}} < \frac{2}{|\Theta_{i}|}
\end{cases}
\end{equation} 
\begin{equation*}
U^{T}_{i}(\tau_{i},d_{i}) = \phi_{i} - U^{i}(\tau_{i},d_{i})
\end{equation*}

\subsection{Smart City Defense Game  Families of Equilibria}

Given the NE payoffs in the game's third stage, we compute the SPNE for the SCDG for each player during the second and first SCDG stage. In what follows, we define the total budget transfer from the ESA (SC entity 1) to the ICT agency (SC entity 2) as $r$.

\begin{theorem}\label{theorem3}
For the SCDG where  $\phi_1=|\Theta_1|\cdot v_1$ is the value of the physical CBG, $\phi_2=|\Theta_2|\cdot v_2$ is the value of the cyber-social CBG and prior to the third game stage:
\begin{itemize}
    \item the available budget of the ESA is $d_1=c_1-r$
    \item the available budget of the ICT agency is $d_2=c_2+r$
    \item the total available budget of the TO $T$ is $\tau$, and
    \item $\frac{2}{|\Theta_1|}<\frac{\tau}{d_1}<1$ and $\frac{2}{|\Theta_2|}<\frac{\tau}{d_2}<1$
\end{itemize}
then the second SCDG stage equilibrium strategy for the TO that maximizes its payoff is:
\begin{equation}\small
    \tau_{1}^{*} = T^{*}(r^{1\rightarrow{}2},r^{2\rightarrow{}1})= \begin{cases}any~choice~\in [0,\tau],~~~ if~~ \frac{\phi_1}{d_1}=\frac{\phi_2}{d_2}\\
    \tau, ~~~~~~~~~~~~~ if~~ \frac{\phi_1}{d_1}>\frac{\phi_2}{d_2}\\
    0, ~~~~~~~~~~~~~ if~~ \frac{\phi_1}{d_1}<\frac{\phi_2}{d_2}
    \end{cases}
\end{equation}
\begin{equation*}\small
            \tau^{*}_2=\tau - \tau_{1}^{*}
    \end{equation*}
In this case if the SCDG parameters also satisfy either

\noindent
$\frac{\phi_1}{c_1}<\frac{\phi_2}{c_2}$, $\frac{2}{|\Theta_1|}<\frac{\tau}{c_1-\frac{\phi_2 c_1-\phi_1 c_2}{\phi_1+\phi_2}}<1$ \& $\frac{2}{|\Theta_2|}<\frac{\tau}{c_2+\frac{\phi_2 c_1-\phi_1c_2}{\phi_1+\phi_2}}<1$ 

\noindent
or

\noindent
$\frac{\phi_1}{c_1}>\frac{\phi_2}{c_2},\frac{2}{|\Theta_1|}<\frac{\tau}{c_1-\frac{\phi_1 c_2-\phi_2 c_1}{\phi_1+\phi_2}}<1$ \& $\frac{2}{|\Theta_2|}<\frac{\tau}{c_2+\frac{\phi_1 c_2-\phi_2 c_1}{\phi_1+\phi_2}}<1$, then the first SCDG stage equilibrium strategies for the two SC entities that maximize their payoff are:
\begin{equation}\small
    R^{*1}=r^{*1\rightarrow{}2}= \begin{cases}
    \rho \in [0,\frac{\phi_2 c_1-\phi_1 c_2}{\phi_1+\phi_2}),~~~if~~ \frac{\phi_1}{c_1}<\frac{\phi_2}{c_2}\\
    0,~~~~~~~~~~~~~~~~~~~~~ otherwise
    \end{cases}
\end{equation}
\begin{equation}\small
    R^{*2}=r^{*2\rightarrow{}1}= \begin{cases}
    \rho \in [0,\frac{\phi_1 c_2-\phi_2 c_1}{\phi_1+\phi_2}),~~~if~~ \frac{\phi_1}{c_1}>\frac{\phi_2}{c_2}\\
    0,~~~~~~~~~~~~~~~~~~~~~ otherwise
    \end{cases}
\end{equation}
and there exists a SCDG SPNE family of $\{T^{*},R^{*1},R^{*2}\}$ as defined above.
\end{theorem}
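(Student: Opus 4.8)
The plan is to solve the SCDG by backward induction, legitimate here since the game is finite and of perfect information, and then to certify subgame perfection through the one-stage-deviation condition of Theorem~\ref{theorem1}. The third (Colonel Blotto) stage requires no fresh optimization: its MSNE payoffs are already pinned down by the entering budgets through Eq.~\ref{Eq12}. The hypothesis $\tfrac{2}{|\Theta_i|}<\tfrac{\tau}{d_i}<1$ forces $\tau_i\le\tau<d_i$ for every admissible split, so both fronts lie in the branch $\tfrac{d_i}{\tau_i}>1$ with $U^i=\phi_i-\phi_i\tfrac{\tau_i}{2d_i}$.

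For the second stage I would substitute these payoffs into $\Psi^T=\phi_1+\phi_2-U^1-U^2$ and set $\tau_2=\tau-\tau_1$. The problem collapses to maximizing the linear functional $\tfrac{\phi_1}{d_1}\tau_1+\tfrac{\phi_2}{d_2}\tau_2$ over the segment $\{\tau_1+\tau_2=\tau,\ \tau_i\ge0\}$. Optimizing a linear objective over a segment places the optimum at a vertex, so the TO concentrates its entire budget on the front with the larger coefficient: $\tau_1^*=\tau$ when $\tfrac{\phi_1}{d_1}>\tfrac{\phi_2}{d_2}$, $\tau_1^*=0$ when the inequality reverses, and any split when the ratios coincide---exactly the asserted stage-2 strategy.

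Turning to the first stage, I would insert the stage-2 best response and write the joint defender payoff $g(r)=U^1+U^2$ as a function of the net transfer $r$, using that the unattacked front ($\tau_i=0$) returns the full $\phi_i$ while the attacked front returns $\phi_i-\phi_i\tfrac{\tau}{2d_i}$. Solving $\tfrac{\phi_1}{c_1-r}=\tfrac{\phi_2}{c_2+r}$ gives the equalizing transfer $r^\star=\tfrac{\phi_2c_1-\phi_1c_2}{\phi_1+\phi_2}$; the auxiliary regularity conditions in the statement are precisely what keep $d_1=\tfrac{\phi_1(c_1+c_2)}{\phi_1+\phi_2}$ and $d_2=\tfrac{\phi_2(c_1+c_2)}{\phi_1+\phi_2}$ inside the range where Eq.~\ref{Eq12} is valid. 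When $\tfrac{\phi_1}{c_1}<\tfrac{\phi_2}{c_2}$ the TO attacks the cyber-social front throughout $r\in[0,r^\star)$, so raising $r$ strictly increases $U^2$ through $d_2=c_2+r$ while $U^1=\phi_1$ stays fixed, making any such transfer jointly beneficial.

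The delicate step---where I expect the real work to lie---is explaining why the equilibrium transfer is the half-open interval $[0,r^\star)$ rather than the joint maximizer $r^\star$ itself. At $r=r^\star$ the TO is indifferent and a best-responding TO may strike the donor's front, in which case the donor's utility drops from $\phi_1$ to $\phi_1-\phi_1\tfrac{\tau}{2d_1}$, violating the ``beneficial for both'' requirement of the coalition. Only transfers keeping the TO strictly in the regime that attacks the recipient's front are individually safe for the donor, which forces $r<r^\star$; the mirror-image argument with the two agencies interchanged yields $R^{*2}$ when $\tfrac{\phi_1}{c_1}>\tfrac{\phi_2}{c_2}$, while $\tfrac{\phi_1}{c_1}=\tfrac{\phi_2}{c_2}$ collapses $r^\star$ to $0$, giving the ``otherwise'' branches. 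I would conclude by verifying the one-stage-deviation condition stage by stage---stage~3 by the CBG equilibrium property, stage~2 by linear-program optimality, and stage~1 by the monotonicity of $g$ under the coalition constraint---thereby certifying $\{T^*,R^{*1},R^{*2}\}$ as an SPNE family.
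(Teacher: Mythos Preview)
Your proposal is correct and follows essentially the same route as the paper's proof: backward induction, recognition that the TO's stage-2 objective is linear in $\tau_1$ (the paper differentiates, you invoke linear-program vertex optimality---same content), and bounding the stage-1 transfer by the requirement that the donor's front stays strictly unattractive to the TO so the donor's payoff is preserved. Your discussion of the half-open interval and the endpoint $r^\star$ is in fact slightly more explicit than the paper's, which simply notes that exceeding the bound ``would lead to a different TO response that would reduce his payoff''; otherwise the two arguments coincide.
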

\begin{proof}
see Appendix \ref{appndx:proof3}.
\end{proof}

Theorem \ref{theorem3} completely characterizes the SPNE  actions of all SCDG participants in the case where the TO has the smallest available budget among all conflicting parties, following the SC budget transfer. In such a case the TO chooses to allocate his entire initial budget to a single CBG taking into account this game's value along with the strength of his opponent budget-wise. If both players are equally unattractive for the TO he randomizes his budget allocation towards the two fights-CBGs. In response, according to the SPNE, the SC entity whose plane is not under threat will transfer budget to the other SC player within a set as his payoff will not be affected by this action. The transfer will take place even if the party that provides resources has less initial budget than his ally. The existence of the upper bound in this transfer guarantees the TO's action and essentially the SPNE's existence.

\begin{theorem}\label{theorem4}
For the SCDG where  $\phi_1=|\Theta_1|\cdot v_1$ is the value of the physical CBG, $\phi_2=|\Theta_2|\cdot v_2$ is the value of the cyber-social CBG and prior to the third game stage:
\begin{itemize}
    \item the available budget of the ESA is $d_1=c_1-r$
    \item the available budget of the ICT agency is $d_2=c_2+r$
    \item the total available budget of the TO $T$ is $\tau$, and
    \item $d_1+d_2<\tau$, $\frac{2}{|\Theta_1|}<\frac{\tau_1}{\frac{\tau}{1+\sigma}}<1$ and $\frac{2}{|\Theta_2|}<\frac{\tau_2}{\frac{\sigma \cdot \tau}{1+\sigma}}<1$
\end{itemize}
where $\sigma=\sqrt{\frac{\phi_2 d_2}{\phi_1 d_1}}$,
then the second SCDG stage equilibrium strategy for the TO that maximizes its payoff is:
\begin{equation}
\begin{split}
    \tau_{1}^{*} = T^{*}(r^{1\rightarrow{}2},r^{2\rightarrow{}1})=\frac{\tau}{1+\sqrt{\frac{\phi_2 d_2}{\phi_1 d_1}}}\\
        \tau^{*}_2=\tau - \tau_{1}^{*}
\end{split}
\end{equation}
In this case the first SCDG stage equilibrium strategies for the two SC entities that maximize their payoff are:
\begin{equation}
    R^{*1}=r^{*1\rightarrow{}2}= \begin{cases}
    \frac{c_1-c_2}{2}-\frac{c_1+c_2}{2}\cdot \sqrt{\frac{\phi_1}{\phi_1+\phi_2}},~~~if~~\frac{c_1-c_2}{2c_1 c_2}>\sqrt{\frac{\phi_1}{\phi_2}}\\
    0,~~~~~~~~~~~~~~~~~~~~~~~~~~~ otherwise
    \end{cases}
\end{equation}
\begin{equation*}
    R^{*2}=r^{*2\rightarrow{}1}= 0
\end{equation*}
and there exists a SCDG SPNE family of $\{T^{*},R^{*1},R^{*2}\}$ as defined above. For the case where the budget strength of the two SC allies is interchanged the same SPNE family exists with the reverse budget transfers.
\end{theorem}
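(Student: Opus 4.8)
The plan is to prove Theorem~\ref{theorem4} by backward induction over the three stages of the SCDG and then invoke the one-stage-deviation principle (Theorem~\ref{theorem1}) to upgrade the resulting profile to an SPNE. Throughout I would stay in the regime carved out by the hypotheses: since $d_1+d_2=(c_1-r)+(c_2+r)=c_1+c_2<\tau$, the TO is the budget-dominant party, and the stated ratio inequalities keep each $\tfrac{d_i}{\tau_i}$ inside the first branch of Eq.~\ref{Eq12}, so at every relevant history the stage-three Colonel Blotto equilibrium payoffs are $U^{i}=\phi_i\tfrac{d_i}{2\tau_i}$ and $U^{T}_i=\phi_i-U^{i}$.

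Stage three is then immediate from Eq.~\ref{Eq12}. Stage two is a one-dimensional optimization for the TO: because $U^{i}=\phi_i\tfrac{d_i}{2\tau_i}$ strictly decreases in $\tau_i$, the TO exhausts its budget ($\tau_1+\tau_2=\tau$) and minimizes the defenders' aggregate $f(\tau_1)=\phi_1\tfrac{d_1}{2\tau_1}+\phi_2\tfrac{d_2}{2(\tau-\tau_1)}$. I would show $f$ is strictly convex on the admissible interval, so the first-order condition $\phi_1\tfrac{d_1}{\tau_1^{2}}=\phi_2\tfrac{d_2}{(\tau-\tau_1)^{2}}$ has a unique interior root, which rearranges to $\tfrac{\tau_1}{\tau-\tau_1}=\sqrt{\tfrac{\phi_1 d_1}{\phi_2 d_2}}$ and hence $\tau_1^{*}=\tfrac{\tau}{1+\sigma}$ with $\sigma=\sqrt{\tfrac{\phi_2 d_2}{\phi_1 d_1}}$, exactly the stated TO strategy; the ratio inequalities certify interiority.

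Substituting $\tau_1^{*},\tau_2^{*}$ back collapses the defenders' payoffs to the reduced forms $U^{1}(r)=\tfrac{\sqrt{\phi_1 d_1}}{2\tau}\big(\sqrt{\phi_1 d_1}+\sqrt{\phi_2 d_2}\big)$ and the symmetric expression for $U^{2}(r)$, with $d_1=c_1-r$, $d_2=c_2+r$, and $r=r^{1\to2}-r^{2\to1}$ the net transfer. For stage one, since only $r$ enters the payoffs, I would first maximize $U^{1}(r)=\tfrac{1}{2\tau}\big(\phi_1(c_1-r)+\sqrt{\phi_1\phi_2(c_1-r)(c_2+r)}\big)$. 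The second summand is the upper half of an ellipse in $r$, hence concave, so $U^{1}$ is concave and its first-order condition yields the claimed $r_1^{*}=\tfrac{c_1-c_2}{2}-\tfrac{c_1+c_2}{2}\sqrt{\tfrac{\phi_1}{\phi_1+\phi_2}}$; the sign condition in the theorem is precisely $r_1^{*}>0$, and when it fails concavity forces the corner $r=0$.

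The decisive and most delicate step is showing $(r^{*1\to2},r^{*2\to1})=(r_1^{*},0)$ is a genuine stage-one Nash equilibrium and not merely the giver's ideal point. Carrying out the same optimization for the ICT agency gives $\arg\max_r U^{2}(r)=r_2^{*}=\tfrac{c_1-c_2}{2}+\tfrac{c_1+c_2}{2}\sqrt{\tfrac{\phi_2}{\phi_1+\phi_2}}>r_1^{*}$, so the receiver would actually prefer a larger net inflow than the giver is willing to supply. Because player $2$ can only \emph{lower} the net transfer through $r^{2\to1}\ge0$, and because $U^{2}$ is increasing on $[0,r_2^{*}]\supseteq[0,r_1^{*}]$, its best response to $r^{1\to2}=r_1^{*}$ is $r^{2\to1}=0$; conversely, given $r^{2\to1}=0$, player $1$'s best response is $r^{1\to2}=r_1^{*}$. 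This asymmetry between the giver's and receiver's ideal points, tightened by the nonnegativity of transfers, is where the argument is genuinely subtle. I would close the analysis by verifying that both $U^{1}$ and $U^{2}$ strictly exceed their no-transfer values (so the coalition's ``beneficial for both'' requirement holds), then invoke Theorem~\ref{theorem1}: no profitable single-stage deviation exists at any history, so $\{T^{*},R^{*1},R^{*2}\}$ is an SPNE, the word ``family'' reflecting the non-uniqueness of the stage-three MSNE realizations sharing these payoffs. The interchanged-budget statement follows verbatim by relabeling $1\leftrightarrow2$.
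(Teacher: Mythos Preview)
Your proof is correct and follows essentially the same backward-induction route as the paper: optimize the TO's split $\tau_1$ via first-order conditions on the simplified payoff from Eq.~\ref{Eq12}, substitute $\tau_1^{*}(r)$ back into the defenders' payoffs, maximize $\Psi^{1}$ over $r$ to obtain $r_1^{*}$, and then argue that the ICT agency's best response is $r^{2\to1}=0$. Your stage-one analysis is in fact a touch sharper than the paper's---by computing the receiver's ideal point $r_2^{*}=\tfrac{c_1-c_2}{2}+\tfrac{c_1+c_2}{2}\sqrt{\tfrac{\phi_2}{\phi_1+\phi_2}}$ and observing $r_2^{*}>r_1^{*}$ you directly verify the best-response property at the candidate profile, whereas the paper only checks that a unilateral transfer from player~2 (with $r^{1\to2}=0$) makes $\Psi^{2}$ decrease.
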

\begin{proof}
see Appendix \ref{appndx:proof4}.
\end{proof}

Theorem \ref{theorem4} completely characterizes the SPNE actions of all SCDG participants when the SC entities are in disadvantage and their budgets are significantly smaller than the total budget of the TO. In this case, the TO allocates budget to both the physical and social games. In response the SC entity with the highest preallocated defense budget ($c_i, i\in\{1,2\}$) chooses to transfer budget to its SC ally.

\begin{theorem}\label{theorem5}
For the SCDG where  $\phi_1=|\Theta_1|\cdot v_1$ is the value of the physical CBG, $\phi_2=|\Theta_2|\cdot v_2$ is the value of the cyber-social CBG and prior to the third game stage:
\begin{itemize}
    \item the available budget of the ESA is $d_1=c_1-r$
    \item the available budget of the ICT agency is $d_2=c_2+r$
    \item the total available budget of the TO $T$ is $\tau$, and
    \item  $\frac{2}{|\Theta_1|}<\frac{\tau-\delta(r)}{d_1(r)}<1$ and $\frac{2}{|\Theta_2|}<\frac{\delta(r)}{d_2(r)}<1$
\end{itemize}
where $\delta=\sqrt{\frac{\phi_2d_1d_2}{\phi_1}}$, then the second SCDG stage equilibrium strategy for the TO that maximizes its payoff is:
\begin{equation}\small
\begin{split}
    \tau_{1}^{*} = T^{*}(r^{1\rightarrow{}2},r^{2\rightarrow{}1})=T^{*}(r) = \tau - \sqrt{\frac{\phi_2 d_1\cdot d_2}{\phi_1}}\\
        \tau^{*}_2=\tau - \tau_{1}^{*}
\end{split}
\end{equation}
In this case the first SCDG stage equilibrium strategies for the two SC entities that maximize their payoff are:
\begin{equation}
    R^{*1}=r^{*1\rightarrow{}2}= \begin{cases}
   \frac{\frac{\phi_2\cdot(c_1+c_2)^2}{4\phi_1\cdot \tau^2}\cdot c_1 -c_2}{1+\frac{\phi_2\cdot(c_1+c_2)^2}{4\phi_1\cdot \tau^2}},~~~ if~~\frac{c_1+c_2}{2\tau}>\sqrt{\frac{\phi_1 c_2}{\phi_2 c_1}}\\
    0,~~~~~~~~~~~~~~~~~~~~~~~~ otherwise
    \end{cases}
\end{equation}
\begin{equation*}
    R^{*2}=r^{*2\rightarrow{}1}= 0
\end{equation*}
and there exists a SCDG SPNE family of $\{T^{*},R^{*1},R^{*2}\}$ as defined above. For the case where the budget strength relation of the two SC allies in comparison to the TO is interchanged the same SPNE family exists with the reverse budget transfers.
\end{theorem}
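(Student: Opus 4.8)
The plan is to solve the three-stage game by backward induction and then certify the result as an SPNE through the one-stage-deviation principle of Theorem~\ref{theorem1}. The first task is to read off, from the hypotheses, which branch of the Colonel-Blotto payoff in Eq.~\ref{Eq12} is active in each of the two final-stage games. The inequalities on $\frac{\tau-\delta(r)}{d_1(r)}$ and $\frac{\delta(r)}{d_2(r)}$ single out the \emph{mixed} Blotto regime in which the ESA remains the stronger party on the physical front (so $U^1=\phi_1-\phi_1\frac{\tau_1}{2d_1}$ and $U^T_1=\phi_1\frac{\tau_1}{2d_1}$ is \emph{linear} in $\tau_1$), while the TO concentrates enough force to overpower the ICT agency on the cyber-social front (so $U^2=\phi_2\frac{d_2}{2\tau_2}$ and $U^T_2=\phi_2\bigl(1-\frac{d_2}{2\tau_2}\bigr)$ is \emph{strictly concave} in $\tau_2$). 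Fixing this regime is the conceptual crux: once the two branches are pinned down, the rest is a pair of nested single-variable optimizations.

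For the second stage I would treat $d_1=c_1-r$ and $d_2=c_2+r$ as given and maximize the TO's objective $\Psi^T=U^T_1+U^T_2$. Both summands increase in their own argument, so the budget constraint binds, $\tau_1+\tau_2=\tau$, and substituting $\tau_1=\tau-\tau_2$ gives
\begin{equation*}
G(\tau_2)=\phi_1\frac{\tau-\tau_2}{2d_1}+\phi_2-\phi_2\frac{d_2}{2\tau_2}.
\end{equation*}
The stationarity condition $G'(\tau_2)=-\frac{\phi_1}{2d_1}+\frac{\phi_2 d_2}{2\tau_2^2}=0$ yields $\tau_2^{*}=\sqrt{\phi_2 d_1 d_2/\phi_1}=\delta$, and $G''(\tau_2)=-\phi_2 d_2/\tau_2^3<0$ certifies it as the unique interior maximizer; hence $\tau_1^{*}=\tau-\delta$, exactly the claimed best response. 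I would then use the hypotheses to check feasibility ($0<\delta<\tau$) and regime-consistency, and rule out the competing branches and corner allocations by noting that steering into the ``both-defenders-strong'' or ``both-defenders-weak'' configuration only lowers $\Psi^T$, the piecewise objective being continuous and single-peaked across the branch boundaries.

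For the first stage I would substitute the TO's best response $\delta(r)=\sqrt{\phi_2(c_1-r)(c_2+r)/\phi_1}$ into the two agencies' equilibrium payoffs. The ICT payoff collapses to $U^2(r)=\tfrac12\sqrt{\phi_1\phi_2\,(c_2+r)/(c_1-r)}$, which is strictly increasing in $r$; this simultaneously forces $R^{*2}=0$ (the recipient never wants to send budget back) and shows that any incoming transfer is accepted, so the coalition's mutual-benefit requirement reduces to checking the ESA alone. For the ESA I write $U^1(r)=\phi_1-\frac{\phi_1\tau}{2(c_1-r)}+\frac{\sqrt{\phi_1\phi_2}}{2}\sqrt{\frac{c_2+r}{c_1-r}}$, differentiate, and set $\frac{dU^1}{dr}=0$. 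After clearing the common powers of $(c_1-r)$ the condition becomes $(c_1+c_2)\sqrt{\frac{c_1-r}{c_2+r}}=2\tau\sqrt{\phi_1/\phi_2}$; squaring and solving the resulting linear equation in $r$ gives precisely $R^{*1}=(Ac_1-c_2)/(1+A)$ with $A=\frac{\phi_2(c_1+c_2)^2}{4\phi_1\tau^2}$, matching the statement. The sign of the numerator, $Ac_1>c_2$, is equivalent to $\frac{c_1+c_2}{2\tau}>\sqrt{\frac{\phi_1 c_2}{\phi_2 c_1}}$, so the interior optimum is selected exactly under the stated threshold and otherwise the maximizer sits at the boundary $r=0$.

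The step I expect to be the real obstacle is this first-stage analysis, because the ESA's payoff depends on $r$ both \emph{directly}, through its shrinking endowment $c_1-r$, and \emph{indirectly}, through the TO's re-optimized split $\delta(r)$, which diverts attack budget away from the physical front; establishing that the stationary point above is a genuine maximum (not a minimum or inflection) and that $U^1$ is increasing on $[0,R^{*1}]$ therefore needs a second-derivative or explicit monotonicity argument rather than a one-line appeal. The other delicate point is the bookkeeping required to keep all four regime inequalities valid \emph{simultaneously} along the path $r\in[0,R^{*1}]$. Finally, I would assemble $\{T^{*},R^{*1},R^{*2}\}$ and invoke Theorem~\ref{theorem1} to conclude it is an SPNE, obtaining the ``interchanged-strength'' variant by relabeling $1\leftrightarrow 2$, which reverses the direction of the transfer.
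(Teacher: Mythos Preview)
Your proposal is correct and follows essentially the same backward-induction route as the paper: you identify the same Blotto regime (defender-strong on front~1, attacker-strong on front~2), optimize the TO's payoff to get $\tau_2^{*}=\delta$ (the paper parameterizes in $\tau_1$ but arrives at the identical formula), substitute back into the agencies' payoffs, and solve the ESA's first-order condition to obtain $R^{*1}=(Ac_1-c_2)/(1+A)$ with the same threshold $\frac{c_1+c_2}{2\tau}>\sqrt{\phi_1 c_2/(\phi_2 c_1)}$. Your treatment of $R^{*2}=0$ via the monotonicity of $U^2$ in the net transfer is a slight streamlining of the paper's separate reverse-transfer computation, and your explicit flags about regime consistency along $[0,R^{*1}]$ and the maximum-versus-minimum check are points the paper handles only by a sign argument on $\partial\Psi^1/\partial r$ to either side of the critical point; otherwise the two arguments coincide.
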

\begin{proof}
see Appendix \ref{appndx:proof5}.
\end{proof}

Theorem \ref{theorem5} completely characterizes the SPNE actions of all SCDG participants when one SC entity is at disadvantage with fewer resources that its opponent TO, while its ally has a larger budget than the TO. In this case, the TO allocates budget to both physical and social games taking into account the strength of the two opponents along with the significance of each fight. In response, the SC entity with the superior preallocated defense budget ($c_i, i\in\{1,2\}$) in comparison to the TO can transfer budget to its SC ally. The existence and exact amount of the transfer should ensure the safety of this entity's plane and it takes place only if it leads to a higher expected utility for both SC players (higher number of expected wins).

\begin{figure}[t]
\centering
	\includegraphics[width=\columnwidth]{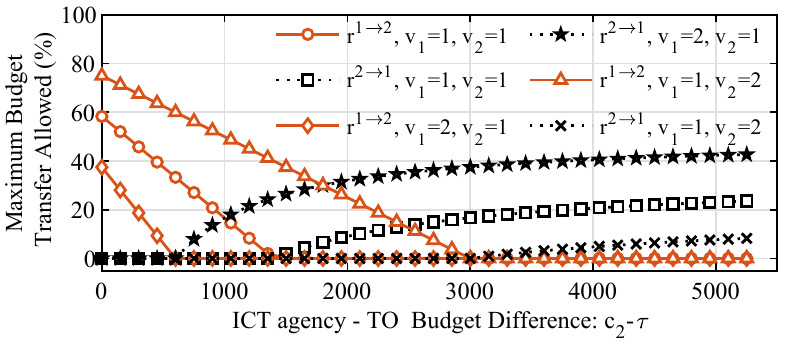}
\vspace{-0.05cm }
	\includegraphics[width=\columnwidth]{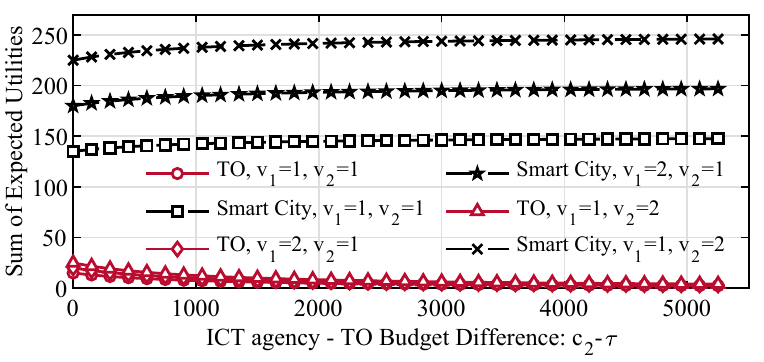} 
\caption{\small TO in budget disadvantage: (Top) Budget transfer among agencies and (Bottom) Expected Utility vs ICT agency budget} \label{fig:2}
\vspace{-0.5cm}
\end{figure}

\section{Numerical Evaluation and Discussion}
In this section, we present a numerical evaluation of the SCDG focusing on the players' actions and response curves for various game parameters and budget strength relations among all parties. In addition, we will present how the deviation from equilibrium strategies affects the payoffs of the defensive SC players towards compromising the public safety in the examined smart city setting.

\subsection{SCDG Analysis}

First, we focus on the case where both SC entities have greater defense budgets than their adversary TO in an SC setting that consists of $|\Theta_1|=50$ physical battlefields and $|\Theta_1|=100$ cyber-social battlefields. We consider a TO whose initial budget is $\tau=200$, an SC ESA with initial budget $c_1=800$, and evaluate how the maximum allowed transfer amount between the SC entities changes as the budget difference between the ICT agency and the TO increases. Fig. \ref{fig:2}a shows these results for different game values, namely when a) the physical battles are more important for the two opponents ($v_1>v_2$), b) the social battles are critical for the two opponents ($v_1<v_2$), or both planes are equally important ($v_1=v_2$). We observe that for small budget differences the ESA makes a transfer to the ICT agency up to a certain point that depends on the value of each game/plane. After this point, the TO chooses to allocate all his budget to fight the cyber-social battle therefore a budget transfer from the ICT to the ESA is now optimal for the SC defense. We also observe that the SC entity transferring resources is not always the one with the highest budget but the one with the minimum $\frac{\phi_i}{c_i},~i\in\{1,2\}$ value, as sometimes the resourceful agency may have SC battlefields of higher importance to fight for.

\begin{figure}[t]
\begin{subfigure}{0.49\linewidth}
\includegraphics[width=\linewidth,keepaspectratio=true]{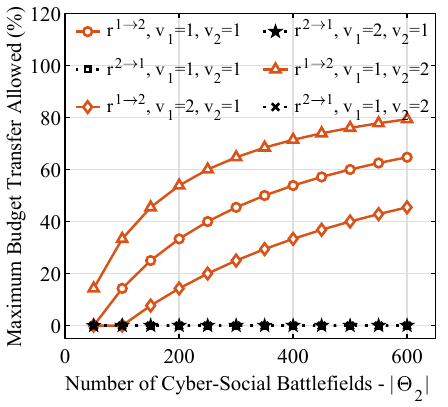}\end{subfigure}
\hspace*{\fill} % separation between the subfigures
\begin{subfigure}{0.49\linewidth}
\includegraphics[width=\linewidth,keepaspectratio=true]{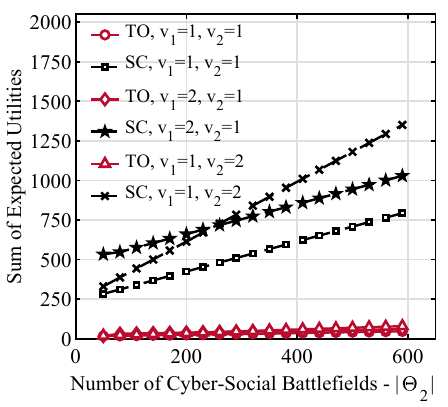}\end{subfigure}
\hspace*{\fill} % separation between the subfigures
\caption{\small TO in budget disadvantage: (Right) Budget transfer among agencies, and (Left) Expected Utility vs Social battlefields Number} \label{fig:3}
\vspace{-0.5cm}
\end{figure}

  \begin{figure*}[t]
 %\begin{minipage}
  \begin{subfigure}{0.32\textwidth}
     \includegraphics[width=\textwidth,keepaspectratio=true]{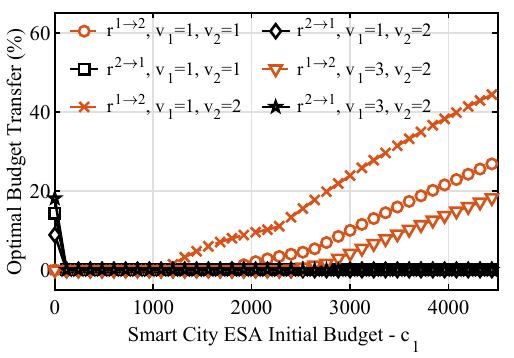}
    \caption{\footnotesize SC Agencies' Optimal Budget Transfer}
  \label{fig:4a}
 %\end{minipage}%
 \end{subfigure}
     %\hfill%
 %\begin{minipage}
  \begin{subfigure}{0.32\textwidth}
     \includegraphics[width=\textwidth,keepaspectratio=true]{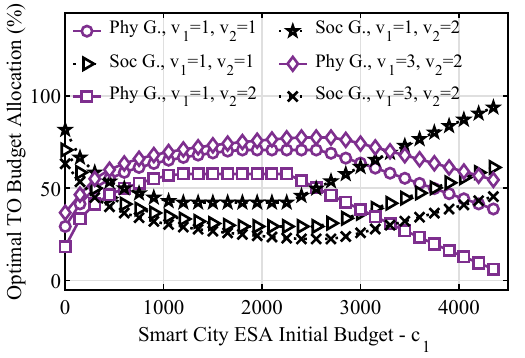}
      \caption{\footnotesize TO's Optimal Budget Allocation across Games}
     \label{fig:4b}
% \end{minipage}
  \end{subfigure}
  \begin{subfigure}{0.32\textwidth}
     \includegraphics[width=\textwidth,keepaspectratio=true]{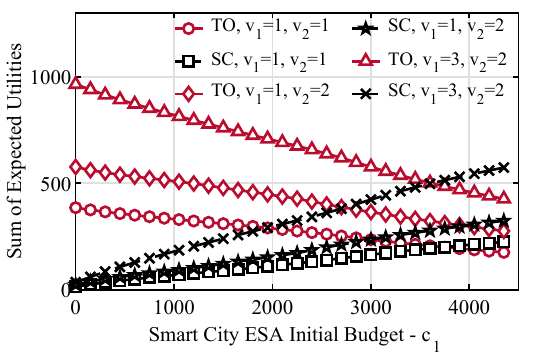}
      \caption{\footnotesize Players' Expected Utilities}
     \label{fig:4c}
      %\hfill%
% \end{minipage}
  \end{subfigure}
   \vspace{-0.1cm }
  \caption{\small Smart City Defense Game Strategies vs  ESA Initial Budget
  }
  \label{fig:4}
    \vspace{-0.5cm }
 \end{figure*}

   \begin{figure*}[t]
 %\begin{minipage}
  \begin{subfigure}[t]{0.32\textwidth}
     \includegraphics[width=\textwidth,keepaspectratio=true]{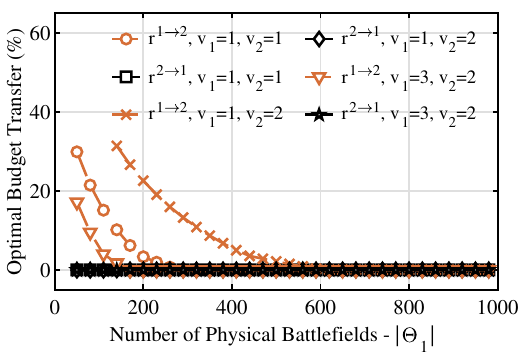}
    \caption{\footnotesize SC Agencies' Optimal Budget Transfer}
  \label{fig:5a}
 %\end{minipage}%
 \end{subfigure}
     %\hfill%
 %\begin{minipage}
  \begin{subfigure}[t]{0.32\textwidth}
     \includegraphics[width=\textwidth,keepaspectratio=true]{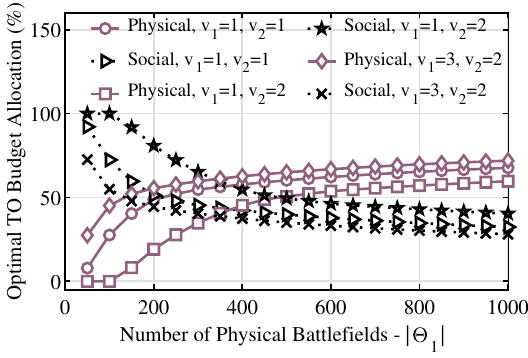}
      \caption{\footnotesize TO's Optimal Budget Allocation across Games}
     \label{fig:5b}
% \end{minipage}
  \end{subfigure}
  \begin{subfigure}[t]{0.32\textwidth}
     \includegraphics[width=\textwidth,keepaspectratio=true]{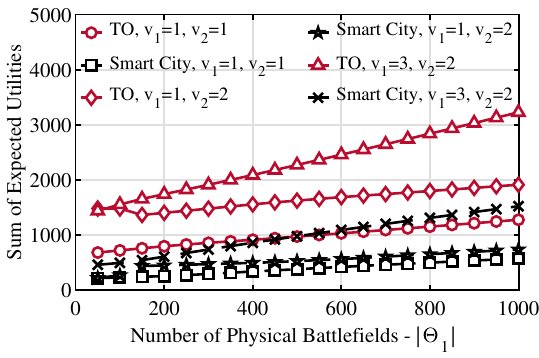}
      \caption{\footnotesize Players' Expected Utilities}
     \label{fig:5c}
      %\hfill%
% \end{minipage}
  \end{subfigure}
    \vspace{-0.1cm }
  \caption{\small Smart City Defense Game Strategies vs  Number of Physical Battlefields ($\tau=1000,~c_1=500,~c_2=150$)}
  \label{fig:5}
  \vspace{-0.5cm }
 \end{figure*}

In Fig. \ref{fig:2}b for the same set of parameters we observe how the expected utility of the TO and the expected utility of the SC as a whole (both agencies) changes as the budget of the ICT agency increases. Note that in the context of the two CBGs the expected utility is analogous to the total number of physical and cyber-social battlefields that were won by each player (successfully attacked by the TO, or successfully defended by the SC agencies).

In Fig. \ref{fig:3} we evaluate how the SC agencies' budget transfer (Fig. \ref{fig:3}a), and expected utilities (Fig. \ref{fig:3}b) change as the number of cyber-social battlefields increases. The number of physical battlefields is $|\Theta_1|=250$, while the initial budgets for the TO, ESA and ICT agency are $\tau=200,~c_1=1500,~c_2=300$, respectively. Evidently, the budget of the ESA prohibits the TO from allocating any resources to the physical fight, therefore we observe transfers only from the ESA to the ICT agency whose amount increases as the number of social fights increases. Those transfers lead to a higher expected utility sum for the SC as seen in Fig. \ref{fig:3}b in comparison to the TO whose limited resources reduce the probability of landing successful attacks.

 \begin{figure*}[t]
 %\begin{minipage}
  \begin{subfigure}[t]{0.32\textwidth}
     \includegraphics[width=\textwidth,keepaspectratio=true]{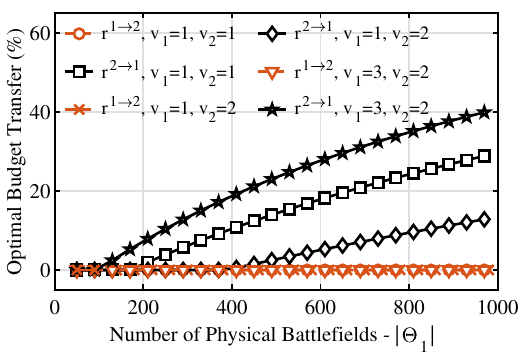}
    \caption{\footnotesize SC Agencies' Optimal Budget Transfer}
  \label{fig:6a}
 %\end{minipage}%
 \end{subfigure}
     %\hfill%
 %\begin{minipage}
  \begin{subfigure}[t]{0.32\textwidth}
     \includegraphics[width=\textwidth,keepaspectratio=true]{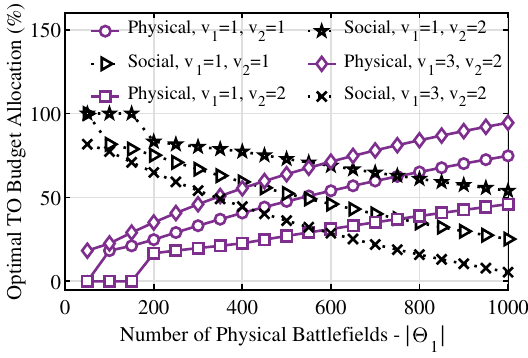}
      \caption{\footnotesize TO's Optimal Budget Allocation across Games}
     \label{fig:6b}
% \end{minipage}
  \end{subfigure}
  \begin{subfigure}[t]{0.32\textwidth}
     \includegraphics[width=\textwidth]{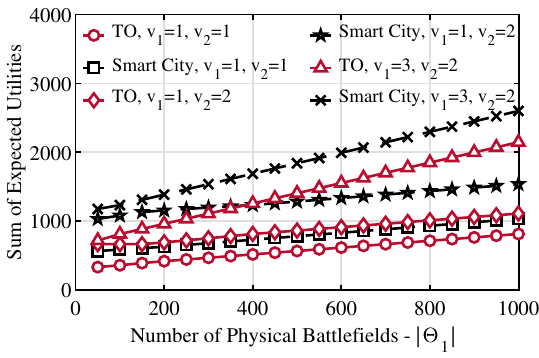}
      \caption{\footnotesize Players' Expected Utilities}
     \label{fig:6c}
      %\hfill%
% \end{minipage}
  \end{subfigure}
    \vspace{-0.1cm }
  \caption{\small Smart City Defense Game Strategies vs  Number of Physical Battlefields ($\tau=1000,~c_1=150,~c_2=1200$)}
  \label{fig:6}
  \vspace{-0.6cm }
 \end{figure*}

Next, we examine the case where the TO has an advantage in comparison to at least one of the SC agencies.
In Fig. \ref{fig:4} we consider a SC setting of $|\Theta_1|=200$ physical and $|\Theta_2|=200$ cyber-social targets and a TO with available budget of $\tau=3500$ which is always larger than the initial budget of the ICT agency which is $c_2=300$. In this case, we examine how increasing the initial budget $c_1$ of the ESA affects the SPNE strategies for various target values $v_1,v_2$. Fig. \ref{fig:4a} shows how the optimal budget transfer among organizations changes. Initially, the ICT agency transfers budget to the ESA, while as the latter becomes more resourcefully the opposite transfer takes place. Accordingly, Fig. \ref{fig:4b} shows how the TO responds to the budget transfer between the two agencies. Initially, since the ICT agency has more budget available, the majority of the TO's budget is allocated to the social battlefields. As the budget of the ESA increases the TO allocates less budget to attack the social targets in an attempt to counter the stronger opponent at the physical plane. This behavior that maximizes the expected number of wins in the two types of targets, stops after a critical point where the ESA initial budget is significantly greater than the TO's budget. After this point (see Fig. \ref{fig:4b}) the TO starts allocating more budget to the social game as now this where the SC is vulnerable. Finally, Fig. \ref{fig:4c} shows the sum of expected utility for the TO and the SC in general. As the total SC defense budget increases so does its expected utility, namely the number of social and physical spaces that will be successfully defended.

Next, in Figures \ref{fig:5},\ref{fig:6}, we evaluate how the SCDG SPNE strategies change as the TO decides to perform attacks against an increasing number of physical targets, while the social spaces under attack remain constant with $|\Theta_2|=200$. Two different cases are considered. In Fig. \ref{fig:5} the TO has greater initial budget than his two opponents with $\tau=1000, c_1=500$, and $c_2=150$. When the number of physical targets under attack is small we observe a budget transfer (see Fig. \ref{fig:5a}) from the resourceful SC entity (here the ESA) to its ally. This transfer is, however, decreasing when the number of physical battlefields grows significantly. The TO's optimal budget allocation is seen in Fig. \ref{fig:5b}, while in Fig. \ref{fig:5c} we observe the expected number of won battles for the SCDG opponents. Evidently, the TO initially allocates more resources to the weaker opponent. As the number of his physical targets increases, it is forced to increase the budget allocation towards the physical CBG. Finally, as seen in Fig. \ref{fig:5c}, the initial budget advantage of the TO ($\tau>c_1,\tau>c_2$) is a significant factor as it always leads to a greater sum of expected utility and therefore a greater number of successful hits. In this case, the budget transfers between the two SC entities described by the SPNE strategies is the optimal response that will minimize losses in both SC planes.

On the contrary, in Fig. \ref{fig:6} we examine the case where the ICT agency has a greater budget than both the TO and the ESA. Again, budget transfer (see Fig. \ref{fig:6a}) occurs to reinforce the weaker SC player. In addition, as the weakest ally has to defend an increasing number of physical targets, the optimal budget transfer percentage from his ally increases as well. In response, as seen in Fig. \ref{fig:6b} the TO allocates larger budget amounts to the physical fight as the number of the physical targets under attack increases. Finally, Fig. \ref{fig:5c} shows the sum of expected utilities for the TO and the SC (combined utility of the two agencies) for this specific parameter set as the number of physical targets increases.

\begin{figure}[t]
\begin{subfigure}{0.49\linewidth}
\includegraphics[width=\linewidth,keepaspectratio=true]{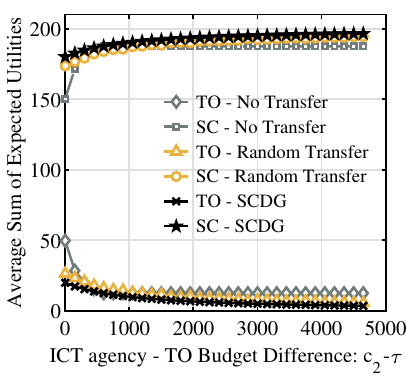} \end{subfigure}
\hspace*{\fill} % separation between the subfigures
\begin{subfigure}{0.49\linewidth} \label{fig:7b}
\includegraphics[width=\linewidth,keepaspectratio=true]{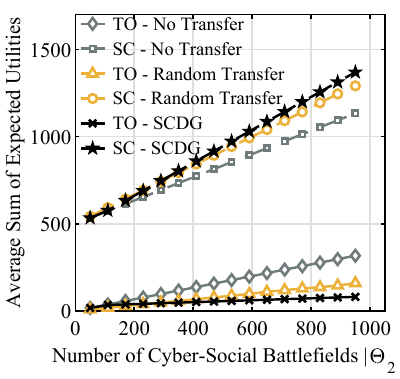}\end{subfigure}
\hspace*{\fill} % separation between the subfigures
\caption{\small Average Sum of Expected Utilities vs (Left) ICT agency initial budget, (Right) Number of Cyber-Social Battlefields  } \label{fig:7}
\vspace{-0.3cm }
\end{figure}

\begin{figure}[t]
\begin{subfigure}{0.49\linewidth}
\includegraphics[width=\linewidth,keepaspectratio=true]{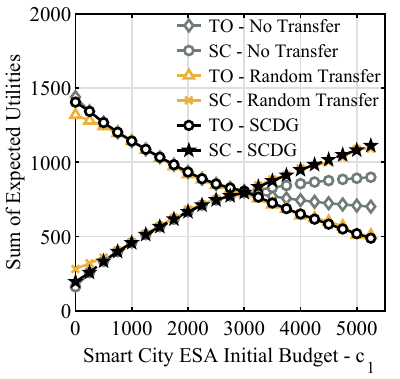}\end{subfigure}
\hspace*{\fill} % separation between the subfigures
\begin{subfigure}{0.49\linewidth}
\includegraphics[width=\linewidth,keepaspectratio=true]{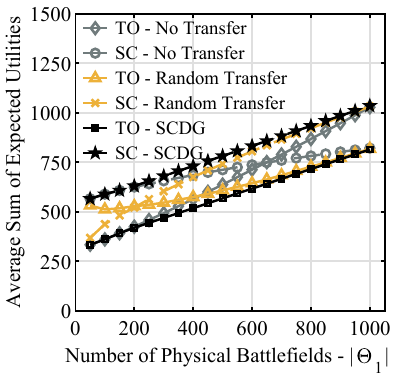}\end{subfigure}
\hspace*{\fill} % separation between the subfigures
\caption{\small Average Sum of Expected Utilities vs (Left) ESA Initial Budget ($\tau=3500,~c_2=1000$), (Right) Number of Physical Battlefields ($\tau=1000,~c_1=150,~c_2=1200$) } \label{fig:8}
\vspace{-0.6cm }
\end{figure}

\begin{figure}[t]
\centering
\includegraphics[width=\columnwidth]{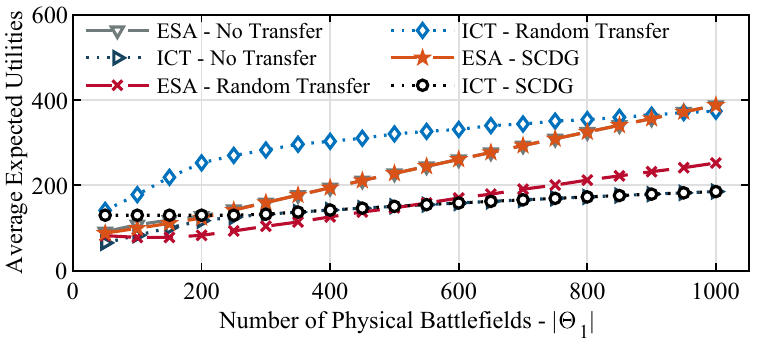}
\caption{\small Average Expected Utilities for SC Agencies vs Number of Physical Battlefields ($\tau=1000,~c_1=500,~c_2=150$)}
\label{fig:9}
\vspace{-0.6cm}
\end{figure}

\subsection{Comparative Analysis}

In this subsection we present comparative results that showcase how the deviation from the SPNE strategies for the two SC entities affect their utilities, and ability to defend physical and social targets, introducing vulnerabilities into the SC setting. In what follows we consider three budget transfer strategies, between SC allies, namely:
\begin{enumerate}[(a)]
    \item no budget transfer occurs
    \item a random transfer between between the ESA and the ICT agency takes place
    \item both SC agency act accordingly to the SCDG SPNE strategies
\end{enumerate}
For these cases, we present the average sum of expected utilities. While for the case (a), and (c) the results are analytical, for the random transfer we averaged the sum of expected utilities from $10^4$ simulations. Again we will examine different cases regarding the initial budget strength relations among SCDG players.

First, the case where the TO is in a budget disadvantage. In Fig. \ref{fig:7}a a SC setting with $|\Theta_1|=|\Theta_2|=100$ physical and social target is considered with their values being equal $v_1=v_2=1$. The initial budget of the TO and ESA is $\tau=200$, and $c_1=800$ respectively. We present the average sum of expected utilities as the budget of the ICT agency increases in relation to the TO total budget. Evidently, when the SCDG strategies are followed by the SC is the highest (more successfully defended socio-physical targets) while the opposite happens for the TO. The same behavior is observed in Fig. \ref{fig:7}a where in the same setting, ICT's budget is set to $c_2=400>\tau$, and the average sum of expected utilities is evaluated against an increasing number of social spaces targeted by the TO.

Next we consider the case where the TO has greater budget than at least one opponent in a SC setting where $|\Theta_1|=|\Theta_2|=800$, $v_1=v_2=1$. Fig. \ref{fig:8}a shows the average sum of expected utilities as the ESA's initial budget increases, for the case where $\tau=3500$ and $c_2=1000$. Fig. \ref{fig:8}b shows again the average utilities when the TO decides to target an increasing number of physical targets and the budget of the TO, ESA, and ICT agency are $\tau=1000,~c_1=150,~c_2=1200$, respectively. Evidently, in both cases, the SCDG strategies for the two SC entities lead to a larger number of successfully defended socio-physical targets than the alternatives. Finally, Fig. \ref{fig:9} shows the average expected utilities for each SC entity separately as the number of physical targets under attack increases and player's budget strength are $\tau=1000,~c_1=500,~c_2=150$. This figure showcases an important property of the SCDG. For the ICT agency, a random transfer yields a higher expected utility/number of wins. However, the SPNE strategy forbids the two SC allies from making a budget transfer. This happens because the proposed game allows budget transfers only if they are beneficial for both allies, and increase their expected wins in both social and physical city battlefields. In our case when a random budget transfer is considered the average expected utility of the ESA is lower than the SPNE strategy of the SCDG, thus randomicity is not beneficial for both allies and both SC planes.

\section{Conclusion}
In this work, we demonstrate a budget management mechanism between Smart City agencies deployed in cases of simultaneous terrorist attacks on multiple city levels and targets. The Smart City is modeled as a setting with two parallel layers, namely a physical, and a cyber-social. Each layer contains multiple targets/spaces, either physical (e.g., landmarks), or social (e.g., tweeter feeds) and their defense is assigned to two city agencies. A terrorist organization allocates budget to attack both SC layers and as a defense measure, the two agencies make budget transfers between them before allocating their resources among targets. In order to capture their interactions and define the optimal strategies that will maximize the SC defenses, we propose the Smart City Defense Game (SCDG) which is a multi-stage extended form game and derive its sub-game perfect Nash equilibrium. The proposed model provides strategies for budget exchanges between SC allies in cases of terrorist threats by considering the response and resource allocations of the enemy across the two SC planes. We show detailed numerical results for various parameter regions where when the SC agencies act according to the SPNE, they manage to maximize the number of defended targets and minimize the cases where the terrorist organization launches successful attacks.

\appendices
\section{Proof of Theorem \ref{theorem3}}\label{appndx:proof3}
The expected Nash equilibrium payoff functions of three SCDG players after stage three are given by Eq. \ref{Eq12} depending on the ratio of available player budgets. The TO during the second stage reacts to the budget allocation of the SC entities ($r^{1\rightarrow{}2}$, $r^{2\rightarrow{}1}$) and allocates his budget $\tau$ in an effort to maximize his expected payoff. For our simplified case where $\frac{2}{|\Theta_1|}<\frac{\tau}{d_1}<1$ and $\frac{2}{|\Theta_2|}<\frac{\tau}{d_2}<1$ the expected payoff of $T$ as a function of his own budget allocation across the physical ($\tau_1$) and social ($\tau_2$) battles is:
\begin{equation*}\small
    \Psi^{T}(\tau_1)=\phi_1 \cdot \frac{\tau_1}{2d_1} + \phi_2 \cdot \frac{\tau_2}{2d_2} 
\Leftrightarrow 
\Psi^{T}= \phi_1 \frac{\tau_1}{2d_1}+\phi_2 \frac{\tau-\tau_1}{2d_2}
\end{equation*}
The first derivative is $\frac{\partial \Psi^{T}}{\partial \tau_{1}}=\frac{\phi_1}{2d_1}-\frac{\phi_2}{2d_2}$ and we have to consider three distinct cases:
\begin{enumerate}[a)]
    \item $ \frac{\partial\Psi^{T}}{\partial \tau_1}=0 \Leftrightarrow \frac{\phi_1}{2d_1}=\frac{\phi_2}{2d_2}$, thus any budget allocation $\tau_1 \in [0,\tau]$ is optimal for the TO
    \item $ \frac{\partial\Psi^{T}}{\partial \tau_1}>0 \Leftrightarrow \frac{\phi_1}{2d_1}>\frac{\phi_2}{2d_2}$, then $\Psi^{T}$ is increasing in $\tau_1 \in [0,\tau]$ and will be maximum at $\Psi^{T}(\tau_1=\tau)= \frac{\phi_1 \tau}{2 d_1}$ which means that the TO will allocate all the budget fighting the physical SC game (CBG 1)
    \item $ \frac{\partial\Psi^{T}}{\partial \tau_1}<0 \Leftrightarrow \frac{\phi_1}{2d_1}<\frac{\phi_2}{2d_2}$, then followig the logic of case b the TO allocates all his budget to the social game (CBG 2)
\end{enumerate}
Therefore:
\begin{equation*}\small
    \tau_{1}^{*} = T^{*}(r^{1\rightarrow{}2},r^{2\rightarrow{}1})= \begin{cases}any~choice~\in [0,\tau],~~~ if~~ \frac{\phi_1}{d_1}=\frac{\phi_2}{d_2}\\
    \tau, ~~~~~~~~~~~~~~~~~~~~~~~~~ if~~ \frac{\phi_1}{d_1}>\frac{\phi_2}{d_2}\\
    0, ~~~~~~~~~~~~~~~~~~~~~~~~~ if~~ \frac{\phi_1}{d_1}<\frac{\phi_2}{d_2}
    \end{cases}
\end{equation*}
\begin{equation}
            \tau^{*}_2=\tau - \tau_{1}^{*}
    \end{equation}

We will now focus on the first SCDG stage, where the two SC entities should decide on the budget transfer between them. In this stage the known SCDG parameters are the two CBGs' values ($\phi_1,\phi_2$) and each SC entity's emergency response budget $c_1$,$c_2$. Assume $\frac{\phi_1}{c_1}<\frac{\phi_2}{c_2}$. In this case if no budget transfer is performed during stage 1 the TO will allocate all his budget to the social game 2 ($\tau^{*}_{1}=0,\tau^{*}_{2}=\tau$) according to the aforementioned second stage response. A positive transfer from player two (ICT) to player one (ESA) will reduce the payoff of player two and will make no impact to the payoff of player one. Thus, $r^{2\rightarrow{}1}=0$ for $\frac{\phi_1}{c_1}<\frac{\phi_2}{c_2}$. Let us now assume that a positive transfer will occur from the ESA to the ICT agency while maintaining the conditions that will trigger the same TO response in stage two\footnote{This requirement stems from the one-stage deviation principle, Section \ref{sec:SPNE} - Theorem \ref{theorem1}}, namely $\frac{2}{|\Theta_1|}<\frac{\tau}{c_1-r^{1\rightarrow{}2}}<1$, $\frac{2}{|\Theta_2|}<\frac{\tau}{c_2+r^{1\rightarrow{}2}}<1$, and $\frac{\phi_1}{c_1-r^{1\rightarrow{}2}}<\frac{\phi_2}{c_2+r^{1\rightarrow{}2}}$.
Then we can calculate the maximum budget transfer that improves the payoff of player two and maintains the payoff of player one (pareto improving transfer \cite{kovenock2012coalitional}) as:
\begin{equation}\small
    \frac{\phi_1}{c_1-r^{1\rightarrow{}2}} < \frac{\phi_2}{c_2+r^{1\rightarrow{}2}} \Leftrightarrow r^{1\rightarrow{}2} < \frac{\phi_2 c_1 - \phi_1 c_2}{\phi_1+\phi_2}
\end{equation}
The ESA (player 1) will never transfer budget that exceeds $\frac{\phi_2 c_1 - \phi_1 c_2}{\phi_1+\phi_2}$ since it would lead to a different TO response that would reduce his payoff. Since the TO assigns all his budget to fight the social CBG, the ESA (acting according to the SPNE) is allowed to transfer up to $\frac{\phi_2 c_1 - \phi_1 c_2}{\phi_1+\phi_2}$ in order to maintain this response and aid the ICT agency. The analysis is analogous for the $\frac{\phi_1}{c_1}<\frac{\phi_2}{c_2}$ case. Finally when $\frac{\phi_1}{c_1}=\frac{\phi_2}{c_2}$ no transfer guarantees an improvement for the SC entities' payoff thus no budget exchange is performed.

\section{Proof of Theorem \ref{theorem4}}\label{appndx:proof4}

For the specific parameters of this case ($d_1+d_2<\tau$, and $\frac{\tau_1}{d_1}<1$, $\frac{\tau_2}{d_2}<1$) the expected payoff of $T$ as a function of his own budget allocation across the physical ($\tau_1$) and social ($\tau_2$) battles following Eq. \ref{Eq12} is:
\begin{equation}\small
    \begin{split}
        \Psi^{T}(\tau_1)=\phi_1-\phi_1\frac{d_1}{2\tau_1}+\phi_2 - \phi_2 \cdot \frac{d_2}{2\tau_2}
        \stackrel{\mathrm{\tau_2=\tau-\tau_1}}{\Longleftrightarrow}\\
        \Psi^{T}(\tau_1)=\phi_1-\phi_1\frac{d_1}{2\tau_1}+\phi_2 - \phi_2 \cdot \frac{d_2}{2(\tau-\tau_1)}
    \end{split}
\end{equation}
The first derivative is:
\begin{equation}\small
        \frac{\partial\Psi^{T}(\tau_1)}{\partial \tau_1}=\frac{\phi_1 d_1}{2(\tau_1)^2}-\frac{\phi_2 d_2}{2(\tau - \tau_1)^2}
\end{equation}        
The budget allocation to $CBG_1$ that will maximize the TO's payoff is:         
 \begin{equation*}\small       
        \frac{\partial\Psi^{T}(\tau_1)}{\partial \tau_1}=0 \Leftrightarrow \\
        \frac{(\tau - \tau_1)^2}{(\tau_1)^2}=\frac{\phi_2 d_2}{\phi_1 d_1}\Leftrightarrow \frac{\tau - \tau_1}{\tau_1}=\sqrt{\frac{\phi_2 d_2}{\phi_1 d_1}}
\end{equation*}   

since $\tau-\tau_1$ is a strictly positive quantity. Thus, $\tau^{*}_1=\frac{\tau}{1+\sqrt{{\frac{\phi_2 d_2}{\phi_1 d_1}}}}$.
Also, since $\frac{\partial^2 \Psi^{T}(\tau_1)}{\partial \tau_1^2}= - \frac{\phi_1 d_1 \tau_1}{(\tau_1)^4} -\frac{\phi_2 d_2 (\tau - \tau_1)}{(\tau - \tau_1)^4}<0$ as $\tau-\tau_1>0$, $\Psi^{T}(\tau_1)$ is concave and $\tau^{*}_1$ a maximum. Given that, without loss of generality we assume that the ESA (player 1) transfers positive budget equal to $r,~r\geq0$ to the ICT agency (player 2), while their initial budget is $c_1, c_2$ respectively. Then the ESA payoff (Eq. \ref{Eq12}) is: 
\begin{equation}\small
    \Psi^{1}(r)=\phi_1 \frac{c_1-r}{2 \tau^{*}_1(r)} = \phi_1 \frac{c_1-r}{2 \cdot \frac{\tau}{1+\sqrt{\frac{\phi_2(c_2+r)}{\phi_1(c_1-r)}}}}
    % = \frac{\phi_1}{2\tau}(c_1 - r)(1+\sqrt{\frac{\phi_2(c_2+r)}{\phi_1(c_1-r)}})
\end{equation}
Finding the first and second derivative yields:
\begin{equation*}\small
\frac{\partial\Psi^{1}(r)}{\partial r}= - \frac{\phi_1}{2\tau}+\frac{\sqrt{\phi_1 \phi_2}}{4\tau} \cdot \frac{c_1-c_2 -2r}{ \sqrt{c_1-r} \cdot \sqrt{c_2+r}}
\end{equation*}
and
\begin{equation*}\small
\frac{\partial^2\Psi^{1}(r)}{\partial r^2}=-\dfrac{\sqrt{\phi_1 \phi_2}\left(c_2+c_1\right)^2}{8\tau\left(c_1-r\right)^\frac{3}{2}\left(c_2+r\right)^\frac{3}{2}}
\end{equation*}
Since $\frac{\partial^2\Psi^{1}(r)}{\partial r^2}<0$, $ \Psi^{1}(r)$ is concave. Evidently, it is beneficial for the ESA to transfer to the ICT agency iff at the beginning of the domain of definition:
\begin{equation*}\small
\begin{split}
\frac{\partial\Psi^{1}(r)}{\partial r}\bigg|_{r=0}>0 \Leftrightarrow - \frac{\phi_1}{2\tau}+\frac{\sqrt{\phi_1 \phi_2}}{4\tau} \cdot \frac{c_1-c_2 -2r}{ \sqrt{c_1-r} \cdot \sqrt{c_2+r}}>0 \\
\Leftrightarrow \frac{c_1-c_2}{2\sqrt{ c_1 c_2}}>\sqrt{\frac{\phi_1}{\phi_2}}
\end{split}
\end{equation*}
This is a necessary condition for the existence of a budget transfer from player 1 to player 2 that is mutually beneficial. The condition also implies that $c_1-c_2>2\sqrt{\frac{\phi_1}{\phi_2}}\sqrt{c_1 c_2}>0 \Leftrightarrow c_1>c_2$. Regarding the optimal amount of budget to be transferred $r^{*1\rightarrow{}2}$, it is given by:
\begin{equation*}\small
\begin{split}
\frac{\partial\Psi^{1}(r)}{\partial r}=0 \Leftrightarrow ... \Leftrightarrow\\
r^2 - (c_1-c_2)r+\frac{1}{4}\frac{\phi_2(c_1-c_2)^2}{\phi_1+\phi_2}-\frac{\phi_1}{\phi_1+\phi_2}c_1c_2 \Leftrightarrow ... \Leftrightarrow\\
 r=\frac{c_1-c_2}{2}\pm\frac{c_1+c_2}{2}\sqrt{\frac{\phi_1}{\phi_1+\phi_2}}
\end{split}
\end{equation*}
Since $r$ describes a transfer from player 1 to player 2 there is the extra restriction of $r<c_1$. Thus, the only viable solution is $r^{*1\rightarrow{}2}=\frac{c_1-c_2}{2}-\frac{c_1+c_2}{2}\sqrt{\frac{\phi_1}{\phi_1+\phi_2}}$.

Regarding the opposite transfer ($r^{*2\rightarrow{}1}$) if we assume that a quantity $r$ is transferred from the ICT agency (player 2) to the ESA then the payoff of player 2 is given by:
\begin{equation}\small
\begin{split}
    \Psi^{2}(r)=\frac{\phi_2}{2} \frac{c_2-r}{\tau^{*}_2(r)} = \frac{\phi_2}{2} \frac{c_2-r}{\tau-\tau^{*}_1(r)} = ... = \\ \frac{\sqrt{\phi_1\phi_2}}{2\tau}(\sqrt{(c_2-r)(c_1+r)})+\frac{\phi_2}{2\tau}(c_2-r)
\end{split}
\end{equation}
The first derivative:
\begin{equation}
\begin{split}
\frac{\partial\Psi^{2}(r)}{\partial r}= ... = -(\frac{\sqrt{\phi_1\phi_2}}{4\tau}\cdot \frac{c_1-c_2+2r}{\sqrt{(c_2-r)(c_1+r)}}+\frac{\phi_2}{2t})<0
\end{split}
\end{equation}
since $c_1>c_2$ in our general case for $r \in [0,c_2)$.
Thus, $r^{*2\rightarrow{}1}=0$ always since no transfer is beneficial for SC player 2.

The same analysis can be followed for the general case of $c_2>c_1 \Leftrightarrow \frac{c_2-c_1}{2\sqrt{c_1c_2}}>\frac{\phi_2}{\phi_1}$ (players' position interchanged) where it is mutually beneficial only for player 2 to make a transfer.

\section{Proof of Theorem \ref{theorem5}}\label{appndx:proof5}
For the specific parameters of this case we calculate the expected payoff of $T$ as a function of his own budget allocation across the physical ($\tau_1$) and social ($\tau_2$) battles following Eq. \ref{Eq12}. Without loss of generality we will assume that the budget relation of the opponents allocated for the physical fight is $\tau_1<d_1 \Leftrightarrow 1<\frac{d_1}{\tau_1}$, and for the cyber-social fight $d_2<\tau_2 \Leftrightarrow \frac{d_2}{\tau_2}<1$. Therefore the payoff will be:
\begin{equation}\small
    \begin{split}
        \Psi^{T}(\tau_1)=\frac{\phi_1\tau_1}{2d_1}+\phi_2 - \phi_2 \cdot \frac{d_2}{2\tau_2}
        \stackrel{\mathrm{\tau_2=\tau-\tau_1}}{\Longleftrightarrow}\\
        \frac{\phi_1\tau_1}{2d_1}+\phi_2 - \phi_2 \cdot \frac{d_2}{2(\tau-\tau_1)}
    \end{split}
\end{equation}
To calculate the optimal budget allocation for the TO:
\begin{equation}\small
\begin{split}
    \frac{\partial\Psi^{T}(\tau_1)}{\partial \tau_1}= \frac{\phi_1}{2d_1}-\frac{\phi_2d_2}{2(\tau-\tau_1)^2}=0 \Leftrightarrow ... \Leftrightarrow\\
    \tau-\tau_1 = \pm \sqrt{\frac{\phi_2 d_1 d_2}{\phi_1}}
\end{split}
\end{equation}
Since by definition $\tau-\tau_1>0$, and $\phi_1$,$\phi_2$,$d_1$,$d_2$ are positive values $\tau^{*}_1=\tau-\sqrt{\frac{\phi_2 d_1 d_2}{\phi_1}}$, which is a maximum for $\Psi^{T}(\tau_1)$ as $\frac{\partial^2\Psi^{T}(\tau_1)}{\partial r^2}=-\frac{\phi_2d_2(\tau-\tau_1)}{(\tau-\tau_1)^4}<0$.
Given that, without loss of generality we assume that the ESA (player 1) transfers budget equal to $r$ to the ICT agency (player 2), while their initial budget is $c_1, c_2$ respectively. Then the ESA payoff (Eq. \ref{Eq12}) is: 
\begin{equation}\small
    \Psi^{1}(r)=\phi_1- \phi_1 \frac{\tau^{*}_1(r)}{2(c_1-r)} = \phi_1 - \phi_1\frac{\tau-\sqrt{\frac{\phi_2(c_1-r)(c_2+r)}{\phi_1}}}{2(c_1-r)}
\end{equation}
Finding the first derivative yields:
\begin{equation}\small
\frac{\partial\Psi^{1}(r)}{\partial r}= -\frac{\phi_1 \tau}{2(c_1-r)^2}+\frac{\sqrt{\phi_1 \phi_2}}{4}\frac{(c_1+c_2)}{\sqrt{c_2+r}\cdot(c_1-r)^\frac{3}{2}}
\end{equation}
The optimal transfer $r^{*1\rightarrow{}2}$ is:
\begin{equation*}\small
\begin{split}
    \frac{\partial\Psi^{1}(r)}{\partial r}=0 \Leftrightarrow \frac{c_2+r}{c_1-r}=\frac{\phi_2\cdot(c_1+c_2)^2}{4\phi_1\cdot \tau^2} \Leftrightarrow 
    r = \frac{\frac{\phi_2\cdot(c_1+c_2)^2}{4\phi_1\cdot \tau^2}\cdot c_1 -c_2}{1+\frac{\phi_2\cdot(c_1+c_2)^2}{4\phi_1\cdot \tau^2}},\\ r^{*1\rightarrow{}2} = \frac{\xi\cdot c_1 -c_2}{1+\xi},~~ \xi=\frac{\phi_2\cdot(c_1+c_2)^2}{4\phi_1\cdot \tau^2}
\end{split}
\end{equation*}
Since $r\in[0,c_1)$, $\frac{\partial\Psi^{1}(r)}{\partial r}>0$ if $r<r^{*1\rightarrow{}2}$, and $\frac{\partial\Psi^{1}(r)}{\partial r}<0$ if $r>r^{*1\rightarrow{}2}$, $r^{*1\rightarrow{}2}$ is a maximum for $\Psi^{1}(r)$.
Thus, if $\frac{\partial\Psi^{1}(r)}{\partial r}\bigg|_{t=0}>0$ a sufficiently small positive transfer to player 2 will also benefit player 1. This holds iff 
\begin{equation*}\small
\begin{split}
-\frac{\phi_1 \tau}{2(c_1-r)^2}+\frac{\sqrt{\phi_1 \phi_2}}{2}\frac{(c_1+c_2)}{2\sqrt{c_2+r}\cdot(c_1-r)^\frac{3}{2}}>0 \Leftrightarrow ... \Leftrightarrow \\
\frac{c_1+c_2}{2\tau}>\sqrt{\frac{\phi_1 c_2}{\phi_2 c_1}}
\end{split}
\end{equation*}
which is a necessary condition for the existence of a budget transfer from player 1 to player 2 that is mutually beneficial. To ensure that player 2 is also benefited we can check: 
\begin{equation*}\small
\begin{split}
    \Psi^{2}(r)=\phi_2 \frac{c_2+r}{2\sqrt{\frac{\phi_2(c_1-r)(c_2+r)}{\phi_1}}},\\
    \frac{\partial\Psi^{2}(r)}{\partial r}= \frac{1}{4}\sqrt{\phi_1\phi_2}\frac{c_1+c_2}{\sqrt{c_2+r}\cdot(c_1-r)^\frac{3}{2}}>0,~~\forall r\in[0,c_1)
\end{split}
\end{equation*}
Therefore, the ICT agency always welcomes a positive transfer from the ESA in this case.

Regarding the opposite transfer ($r^{*2\rightarrow{}1}$) if we assume that a quantity $r$ is transferred from the ICT agency (player 2) to the ESA then the payoff of player 2 is given by:
\begin{equation}\small
\begin{split}
    \Psi^{2}(r)=\frac{\phi_2}{2} \frac{c_2-r}{\tau^{*}_2(r)} = \frac{\phi_2}{2} \frac{c_2-r}{\tau-\tau^{*}_1(r)}  = ... = \\ 
    \frac{\phi_2}{2} \frac{c_2-r}{\sqrt{\frac{\phi_2(c_1+r)(c_2-r)}{\phi_1}}}=\frac{\sqrt{\phi_1\phi_2}\cdot\sqrt{c_2-r}}{2\cdot\sqrt{c_1+r}}
\end{split}
\end{equation}
The first derivative:
\begin{equation}\small
\begin{split}
\frac{\partial\Psi^{2}(r)}{\partial r}= ... = -\frac{\sqrt{\phi_1\phi_2}}{2}\cdot \frac{c_1+c_2}{\sqrt{(c_2-r)}(c_1+r)^\frac{3}{2}}<0
\end{split}
\end{equation}
 $\forall~r \in [0,c_2)$.
Thus, $\Psi^{2}(r)$ is decreasing in this case and $r^{*2\rightarrow{}1}=0$ always since no transfer is beneficial for the ICT agency.

The same analysis can be followed for the case where the budget relation of the opponents allocated for the physical fight is $\tau_1>d_1$, and for the cyber-social fight $d_2>\tau_2$ (players' position interchanged). Then, it is mutually beneficial only for the ICT agency to make a transfer and for the ESA to accept it.

\bibliographystyle{IEEEtran}
\bibliography{IEEEabrv,references}

\end{document}